%%
%% This is file `sample-sigconf.tex',
%% generated with the docstrip utility.
%%
%% The original source files were:
%%
%% samples.dtx  (with options: `sigconf')
%% 
%% IMPORTANT NOTICE:
%% 
%% For the copyright see the source file.
%% 
%% Any modified versions of this file must be renamed
%% with new filenames distinct from sample-sigconf.tex.
%% 
%% For distribution of the original source see the terms
%% for copying and modification in the file samples.dtx.
%% 
%% This generated file may be distributed as long as the
%% original source files, as listed above, are part of the
%% same distribution. (The sources need not necessarily be
%% in the same archive or directory.)
%%
%% Commands for TeXCount
%TC:macro \cite [option:text,text]
%TC:macro \citep [option:text,text]
%TC:macro \citet [option:text,text]
%TC:envir table 0 1
%TC:envir table* 0 1
%TC:envir tabular [ignore] word
%TC:envir displaymath 0 word
%TC:envir math 0 word
%TC:envir comment 0 0
%%
%%
%% The first command in your LaTeX source must be the \documentclass command.
\documentclass[conference]{IEEEtran}
\IEEEoverridecommandlockouts
\pagestyle{plain}
\usepackage{etoolbox}
\makeatletter
\patchcmd{\@makecaption}
  {\scshape}
  {}
  {}
  {}
\makeatother
\usepackage[noblocks]{authblk}
\usepackage{algorithm}
\usepackage{algpseudocode}
\usepackage{qcircuit}
\usepackage{physics}
\usepackage{tikz}
\usepackage{amsmath}
\usepackage{amsthm}
\usepackage{amssymb}
\newtheorem{definition}{Definition}
\newtheorem{lemma}{Lemma}
\newtheorem{theorem}{Theorem}
\newcommand*\circled[1]{\tikz[baseline=(char.base)]{
            \node[shape=circle,draw,inner sep=1pt] (char) {#1};}}

%% NOTE that a single column version may be required for 
%% submission and peer review. This can be done by changing
%% the \doucmentclass[...]{acmart} in this template to 
%% \documentclass[manuscript,screen]{acmart}
%% 
%% To ensure 100% compatibility, please check the white list of
%% approved LaTeX packages to be used with the Master Article Template at
%% https://www.acm.org/publications/taps/whitelist-of-latex-packages 
%% before creating your document. The white list page provides 
%% information on how to submit additional LaTeX packages for 
%% review and adoption.
%% Fonts used in the template cannot be substituted; margin 
%% adjustments are not allowed.
%%
%%
%% \BibTeX command to typeset BibTeX logo in the docs
\AtBeginDocument{%
  \providecommand\BibTeX{{%
    \normalfont B\kern-0.5em{\scshape i\kern-0.25em b}\kern-0.8em\TeX}}}

\begin{document}

\title{Shallow Quantum Circuit Implementation of Symmetric Functions with Limited Ancillary Qubits}

\author[1,2]{Wei Zi}
\author[1,2]{Junhong Nie}
\author[1,2, *]{Xiaoming Sun}
\affil[1]{State Key Lab of Processors, Institute of Computing Technology, Chinese Academy of Sciences, Beijing 100190, China}
\affil[2]{School of Computer Science and Technology, University of Chinese Academy of Sciences, Beijing 100049, China}
\affil[*]{sunxiaoming@ict.ac.cn}

\maketitle
\begin{abstract}
    In quantum computation, optimizing depth and number of ancillary qubits in quantum circuits is crucial due to constraints imposed by current quantum devices. This paper presents an innovative approach to implementing arbitrary symmetric Boolean functions using poly-logarithmic depth quantum circuits with logarithmic number of ancillary qubits. Symmetric functions are those whose outputs rely solely on the Hamming weight of the inputs. These functions find applications across diverse domains, including quantum machine learning, arithmetic circuit synthesis, and quantum algorithm design (e.g., Grover's algorithm). Moreover, by fully leveraging the potential of qutrits (an additional energy level), the ancilla count can be further reduced to 1. The key technique involves a novel poly-logarithmic depth quantum circuit designed to compute Hamming weight without the need for ancillary qubits. The quantum circuit for Hamming weight is of independent interest because of its broad applications, such as quantum memory and quantum machine learning.
\end{abstract}

\section{introduction}

Quantum computing has attracted considerable attention for its potential to surpass classical computing, exemplified by Shor's algorithm \cite{shor1994algorithms}. There have been notable advancements in hardware implementations and quantum computing software optimizations in recent years. One of the most critical research directions is the optimization of quantum circuits \cite{10.1145/3581784.3607032,10046102,10.1145/3458817.3476189,bullock2005asymptotically}. The optimization of circuit depth and ancilla count has garnered significant attention from researchers, as these two measures correspond to the time and space complexity of quantum computation, respectively \cite{maslov2021quantum,takahashi2021power,chattopadhyay2016low,ISCAqutrit,zidac2023,nie2023quantum}.

Symmetric functions \cite{canteaut2005symmetric} are Boolean functions whose outputs depend solely on the Hamming weight of their inputs. These functions have wide-ranging applications in reversible circuit synthesis \cite{yang2005majority}, quantum adder \cite{biswal2022efficient}, and quantum machine learning \cite{nashiry2017reversible}. As a special class of Boolean functions, symmetric functions can serve as oracles in quantum algorithms such as Grover's algorithm \cite{grover1996fast}. The quantum circuit implementation of symmetric functions has garnered attention from researchers \cite{10.1145/2894757,deb2013reversible}. Perkowski et al. \cite{perkowski2001regular,53f780b7a81e471894c6710887a330ac} proposed a quantum circuit with $O(n^2)$ depth to implement symmetric functions, utilizing $O(n^2)$ ancillary qubits, where $n$ is the number of input qubits. Maslov and Dmitri \cite{maslov2004dynamic,maslov2006efficient} maintained the same circuit depth while reducing the ancilla count to $O(n)$. Anupam and Anubhab \cite{chattopadhyay2016low} devised a circuit with $O(n\log n)$ depth to implement symmetric functions with $O(n)$ ancillary qubits. Takahashi and Tani \cite{takahashi2021power} were able to reduce the circuit depth to $O(\log^2 n)$. However, their approach requires $O(\log n)$ clean ancillary qubits and $O(n \log^2 n)$ borrowed ancillary qubits. Maslov et al. \cite{maslov2021quantum} implement symmetric functions without ancillary qubits but at the expense of $O(n^2)$ circuit depth.

In this paper, we introduce a novel approach to implementing arbitrary symmetric functions using quantum circuits of depth $O(\log^2 n)$, requiring only $\lceil \log (n+1) \rceil$ clean ancillary qubits. We are the first to demonstrate that symmetric functions can be implemented with lower circuit depth and ancilla count simultaneously. We provide a comparison of these results in Table~\ref{tab:result}. It's worth noting that any Boolean function can be transformed into a symmetric function by repeating the input \cite{chrzanowska1999logic}. In \cite{1674562}, a method is proposed to optimize the number of repetitions for symmetrization, greatly extending potential applications of our result. Our key technique is an efficient quantum circuit designed to compute the Hamming weight.

\begin{table}[h]
    \centering
    \caption{The comparison of circuit depth and ancilla count of quantum circuits implementing symmetric functions.}
    \renewcommand{\arraystretch}{1.5}
    \begin{tabular}{ccc}
    \hline \hline
         & circuit depth & ancilla count \\  \hline
       \cite{perkowski2001regular,53f780b7a81e471894c6710887a330ac}  & $O(n^2)$ & $O(n^2)$ \\ 
       \cite{maslov2004dynamic,maslov2006efficient} & $O(n^2)$ & $O(n)$ \\ 
       \cite{chattopadhyay2016low} & $O(n\log n)$ & $O(n)$ \\ 
       \cite{takahashi2021power} & $O(\log^2 n)$ & $O(n\log^2n)$ \\ 
       \cite{maslov2021quantum} & $O(n^2)$ & $0$ \\ 
       our result & $O(\log^2n)$ & $\lceil \log (n+1) \rceil$ \\ \hline \hline
    \end{tabular}
    \label{tab:result}
\end{table}

The Hamming weight of a bit-string is the count of 1's in the string, and the Hamming distance represents the number of positions at which two bit-strings of the same length differ. Quantum circuit for Hamming distance is broadly applicable in quantum memory \cite{khan2022ep}, and quantum machine learning \cite{kathuria2020implementation}, particularly in quantum nearest neighbor classification algorithms \cite{ruan2017quantum,li2022quantum,berti2022effect,li2023quantum,tian2021fake}. Indeed, any efficient quantum circuit implementation of Hamming weight implies an efficient implementation for Hamming distance simply by adding two layers of CNOT gates. Therefore, the quantum circuit implementation of Hamming weight is of independent interest. To implement a quantum circuit for computing the Hamming weight of an $n$-bit string, $\lceil \log (n+1) \rceil$ output qubits are necessary to store the results and are therefore not included in the ancilla count. In \cite{li2022quantum,li2023quantum}, the Hamming weight is computed using a quantum circuit with a depth of $n\log ^2n$ and $O(1)$ ancillary qubits. In \cite{chattopadhyay2016low,orts2024quantum}, although the circuit depth is reduced to $O(n)$, the ancilla count increases to $O(n)$. In \cite{takahashi2021power}, the circuit depth is further reduced to $O(\log ^2 n)$ but using $O(n \log ^2 n)$ borrowed ancillary qubits.

In this paper, we propose a quantum circuit with a depth of $O(\log ^2n)$ to compute the Hamming weight without the need for ancillary qubits. It achieves the best circuit depth and ancilla count simultaneously compared to previous results. Table~\ref{tab:result2} presents a comparison of results regarding Hamming weight.

\begin{table}[h]
    \centering
    \caption{The comparison of circuit depth and ancilla count of quantum circuits compute the Hamming weight.}
    \renewcommand{\arraystretch}{1.5}
    \begin{tabular}{ccc}
    \hline \hline
         & circuit depth & ancilla count \\  \hline
       \cite{li2022quantum,li2023quantum}  & $O(n\log ^2n)$ & $O(1)$ \\ 
       \cite{chattopadhyay2016low,orts2024quantum} & $O(n)$ & $O(n)$ \\ 
       \cite{takahashi2021power} & $O(\log ^2n)$ & $O(n\log ^2n)$ \\ 
       our result & $O(\log^2 n)$ & $0$ \\ \hline \hline
    \end{tabular}
    \label{tab:result2}
\end{table}

The implementation of symmetric functions consists of two steps. In the first step, we construct a quantum circuit of depth $O(\log^2 n)$ to compute the Hamming weight of the input and store the result in $ \lceil \log (n+1) \rceil$ clean ancillary qubits. This turns a symmetric function with $n$-bit input to a Boolean function with $\lceil \log (n+1) \rceil$-bit input. In the second step, we treat the $ \lceil \log (n+1) \rceil$ clean ancillary qubits as input qubits and the original $n$-qubit input qubits as borrowed ancillary qubits. By designing a parallelized quantum circuit with a depth of $O(\log^2 n)$, we are able to compute the output of the symmetric function on the target qubit.

We can further reduce the ancilla count to just one by incorporating an additional energy level (qutrit). Many physical systems inherently possess more than two energy levels, known as qudits \cite{ringbauer2022universal}. These higher energy levels can be harnessed to enhance information density \cite{zidac2023,bullock2005asymptotically} or serve as ancillary energy levels \cite{ISCAqutrit,chu2023scalable}. Gokhale et al. \cite{ISCAqutrit} utilized qutrits to implement a generalized $n$-Toffoli gate with a depth of $O(\log n)$ without ancillary qutrit. We extend their findings and implement arbitrary symmetric functions with a circuit depth of $O(\log^2 n)$ using only one ancillary qutrit in qutrit systems. This expansion broadens the range of achievable functions in qutrit systems from the $n$-Toffoli gate to any symmetric function, albeit with a slight increase in circuit depth and ancilla count.

The subsequent sections of this paper are structured as follows: Section~\ref{sec:pre} provides preliminaries. Sections~\ref{sec:qubit} and \ref{sec:qutrit} detail the implementation of symmetric functions using quantum circuits in qubit and qutrit systems, respectively. Section~\ref{sec:analysis} presents the analysis and evaluation of our results. Finally, Section~\ref{sec:conc} concludes the paper.

\section{Preliminary}
\label{sec:pre}

We define the symmetric function and quantum circuit implementation of a Boolean function as follows:

\begin{definition}
    A symmetric function $f: \{0,1\}^n \to \{0,1\}$ is a Boolean function that satisfies the following condition: for arbitrary permutation $\pi$ of the $n$ terms, we have $f(x_1,x_2,\dots,x_n) = f(x_{\pi(1)},x_{\pi(2)},\dots,x_{\pi(n)})$.
\end{definition}

Note that the output of a symmetric function is determined solely by the Hamming weight of its input.

\begin{definition}
    A quantum circuit $C$ implementing a Boolean function $f\!:\! \{0,1\}^n \! \to \! \{0,1\}$ means that $C\! \ket{\boldsymbol{x}}\! \ket{a}\! \ket{t} \! = \! \ket{\boldsymbol{x}}\! \ket{a}\! \ket{t \! \oplus \! f(\boldsymbol{x})}$, where $\boldsymbol{x} \in \{0,1\}^n$, $\ket{t}$ and $\ket{a}$ represents the target qubit and ancillary qubits. Circuit $C$ is composed of CNOT and single-qubit gates.
\end{definition}

The $Z(\theta)$ gate is defined by: $
        Z(\theta)= \begin{pmatrix}
            1 & 0 \\
            0 & e^{i\theta}
        \end{pmatrix}.$ 
        
The quantum fan-out gate $F_n$ is defined by: 
\begin{displaymath}
    F_n\ket{a}\ket{b_1}\ket{b_2}\cdots \ket{b_n} = \ket{a}\ket{b_1 \oplus a}\ket{b_2 \oplus a} \cdots \ket{b_n \oplus a}.
\end{displaymath}

The $Z(\theta)$ gate and quantum fan-out gate $F_n$ will be used in section~\ref{sec:qubit}. There is a well-known result regarding the quantum circuit implementation of $F_n$, which has been utilized in several papers \cite{takahashi2021power, fang2003quantum}. Therefore, we illustrate the circuit of $F_8$ in Fig.~\ref{fig:F_8} and omit the proof of Lemma~\ref{lem:fanout}.

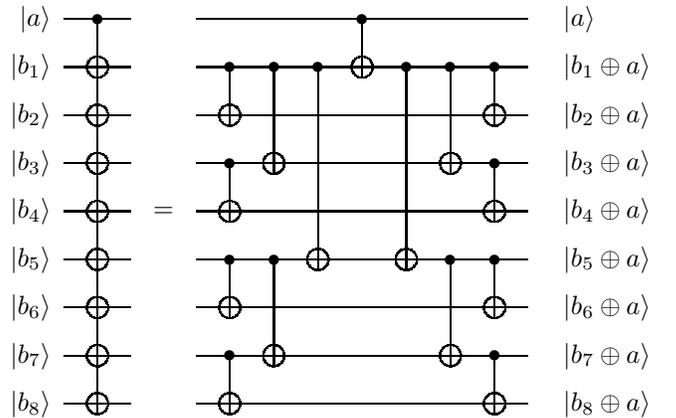
\begin{figure}[h]
\centering
\begin{flushleft}\ \ \ \ \ \ \ \ \ \Qcircuit @C=0.85em @R=1em @!R{
    \lstick{\ket{a}} & \ctrl{1} & \qw &&& \qw & \qw & \qw & \ctrl{1} & \qw & \qw & \qw & \qw & \rstick{\ket{a}} \\
    \lstick{\ket{b_1}} & \targ \qwx[1] & \qw &&& \ctrl{1} & \ctrl{2} & \ctrl{4} & \targ & \ctrl{4} & \ctrl{2} & \ctrl{1} & \qw & \rstick{\ket{b_1 \oplus a}} \\
    \lstick{\ket{b_2}} & \targ \qwx[1] & \qw &&& \targ & \qw & \qw & \qw & \qw & \qw & \targ & \qw & \rstick{\ket{b_2 \oplus a}} \\
    \lstick{\ket{b_3}} & \targ \qwx[1] & \qw &&& \ctrl{1} & \targ & \qw & \qw & \qw & \targ & \ctrl{1} & \qw & \rstick{\ket{b_3 \oplus a}} \\
    \lstick{\ket{b_4}} & \targ \qwx[1] & \qw &\push{=}&& \targ & \qw & \qw & \qw & \qw & \qw & \targ & \qw & \rstick{\ket{b_4 \oplus a}} \\
    \lstick{\ket{b_5}} & \targ \qwx[1] & \qw &&& \ctrl{1} & \ctrl{2} & \targ & \qw & \targ & \ctrl{2} & \ctrl{1} & \qw & \rstick{\ket{b_5 \oplus a}} \\
    \lstick{\ket{b_6}} & \targ \qwx[1] & \qw &&& \targ & \qw & \qw & \qw & \qw & \qw & \targ & \qw & \rstick{\ket{b_6 \oplus a}} \\
    \lstick{\ket{b_7}} & \targ \qwx[1] & \qw &&& \ctrl{1} & \targ & \qw & \qw & \qw & \targ & \ctrl{1} & \qw & \rstick{\ket{b_7 \oplus a}} \\
    \lstick{\ket{b_8}} & \targ & \qw &&& \targ & \qw & \qw & \qw & \qw & \qw & \targ & \qw & \rstick{\ket{b_8 \oplus a}}
}
\end{flushleft}
\caption{The quantum circuit implementing the $F_8$ gate.}
\label{fig:F_8}
\end{figure}

\begin{lemma}
\label{lem:fanout}
    The quantum fan-out gate $F_n$ can be implemented using a quantum circuit of depth $O(\log n)$ without ancillary qubit.
\end{lemma}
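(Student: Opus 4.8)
The plan is to prove the lemma constructively, by exhibiting the three-stage circuit of Fig.~\ref{fig:F_8} and verifying both its depth and its correctness for \emph{arbitrary} initial target values $\ket{b_1}\cdots\ket{b_n}$. The naive strategy of using $a$ to control one CNOT into each $b_i$ already realizes $F_n$ but has depth $n$; to reach $O(\log n)$ I would instead write the circuit as $F_n = U^{-1}\cdot\mathrm{CNOT}(a\to b_1)\cdot U$, where the ``up-sweep'' $U$ folds the target register along a balanced binary tree, the middle gate injects $a$ into the root $b_1$, and the ``down-sweep'' $U^{-1}$ is the exact mirror of $U$.

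First I would define $U$ as $\lceil\log n\rceil$ layers of CNOT gates. Viewing $b_1,\dots,b_n$ as the leaves of a binary tree, layer $k$ combines adjacent blocks of size $2^{k}$ into blocks of size $2^{k+1}$ by letting the head (lowest-indexed wire) of each left block control the head of the adjacent right block; this is exactly the first three columns of Fig.~\ref{fig:F_8}. Within a layer the gates act on disjoint wires, so each layer has depth one, and the global head $b_1$ is only ever used as a control, so $U$ leaves $b_1$ untouched. Since $U$, the single middle CNOT, and $U^{-1}$ are all built from CNOT gates acting solely on the $n+1$ input wires, the total depth is $2\lceil\log n\rceil+1=O(\log n)$ with no ancilla; for $n$ not a power of two the same construction over an incomplete binary tree keeps the depth at $O(\log n)$.

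For correctness I would argue over $\mathrm{GF}(2)$, where the CNOT circuit $U$ acts on the target register as an invertible linear map $\vec b\mapsto M\vec b$ that does not involve $a$. Tracking the injected bit, the middle gate sends $M\vec b\mapsto M\vec b + a\,e_1$, and $U^{-1}$ then produces the final target state $\vec b + a\,M^{-1}e_1$; hence the circuit computes the fan-out $b_i\mapsto b_i\oplus a$ for every $i$ precisely when $M^{-1}e_1=\mathbf 1$, equivalently $M\mathbf 1 = e_1$. I would verify $M\mathbf 1 = e_1$ by a one-line induction on the layers of $U$: running $U$ on the all-ones input, each block-combining CNOT sets the right head to $1\oplus 1 = 0$ while leaving the left head at $1$, so after $\lceil\log n\rceil$ layers only the root $b_1$ survives as $1$ and all other coordinates are $0$.

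The main obstacle, and the reason the slicker ``copy $a$, then repeatedly double the number of copies'' tree cannot be used verbatim, is that the targets start in arbitrary (possibly entangled or superposed) states rather than in $\ket 0$: a direct doubling would leave $b_i$ holding $b_i\oplus a$ instead of a clean copy of $a$, so reused targets would propagate the wrong value. The conjugation structure $U^{-1}(\cdot)\,U$ is precisely what fixes this, because the down-sweep must simultaneously uncompute every intermediate partial XOR created by the up-sweep and broadcast the freshly injected $a$ to all leaves. Confirming that one mirror sweep accomplishes both tasks at once --- equivalently, that $M^{-1}$ spreads $e_1$ to the all-ones vector --- is the crux of the argument, and the $\mathrm{GF}(2)$ reformulation above is the cleanest way I would discharge it.
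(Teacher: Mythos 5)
Your proposal is correct and coincides with the paper's own construction: your $U^{-1}\cdot\mathrm{CNOT}(a\to b_1)\cdot U$ binary-tree circuit is exactly the circuit shown in Fig.~\ref{fig:F_8}, with the same depth $2\lceil\log n\rceil+1$ that the paper later restates in Section~\ref{sec:analysis}. The paper omits the correctness argument as well known, so your $\mathrm{GF}(2)$ verification that $M\mathbf{1}=e_1$ simply supplies the details the paper skips rather than taking a different route.
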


This paper introduces two types of ancillary qubits: clean and borrowed ancillary qubits. Clean ancillary qubits are promised to have a quantum state of $\ket{0}$ before use and it is required to recover them back to state $\ket{0}$ after use. Borrowed ancillary qubits can be in an arbitrary state before use and it is required that their quantum state remains unchanged after use.

A qutrit is a three-level quantum-mechanical system whose computation basis consists of states $\ket{0}$, $\ket{1}$, and $\ket{2}$. The definition of two types of ancillary qutrits is similar to that of qubits. When designing quantum circuits in qutrit systems, we use $X_{ij}$ gate and $\ket{0}$-$X_{01}$ gate as elementary gates. The $X_{ij}$ gate is defined as: $$X_{ij} = \ket{i}\bra{j}+\ket{j}\bra{i}+\ket{k}\bra{k},$$ where $k\neq i$ and $k\neq j$. The $\ket{0}$-$X_{01}$ gate applies the $X_{01}$ gate on the target qutrit if and only if the control qutrit state is $\ket{0}$. $X_{+1}$ gate and $X_{-1}$ gate will be utilized in Section~\ref{sec:qutrit}. Specifically, $X_{+1} = X_{02}X_{01} = \sum_i \ket{(i+1) \mod 3}\bra{i}$ and $X_{-1} = X_{+1}X_{+1}$.

\section{Efficient Implementation of Symmetric Functions}\label{sec:qubit}

We first introduce a quantum gate that can compute the Hamming weight of the input and store this value in the phase of a qubit.

\begin{definition}
    The function of quantum gate $C$-$M_n(\theta)$ is:
    % \begin{equation*}
    %     C\text{-}M_n(\theta)\left( \alpha \ket{0}+\beta \ket{1} \right) \otimes \ket{\boldsymbol{x}} =\left( \alpha \ket{0}+ \beta e^{i\theta(n-2|\boldsymbol{x}|)}\ket{1}\right) \otimes \ket{\boldsymbol{x}}.
    % \end{equation*}
    \begin{equation*}
        \left( \alpha \ket{0}+\beta \ket{1} \right) \otimes \ket{\boldsymbol{x}} \to \left( \alpha \ket{0}+ \beta e^{i\theta(n-2|\boldsymbol{x}|)}\ket{1}\right) \otimes \ket{\boldsymbol{x}}.
    \end{equation*}
    Where $\theta \in \mathbb{R}, \ket{\boldsymbol{x}}=\ket{x_{1},x_{2},\cdots, x_{n}}, x_{i}\in \{0,1\}, \alpha \in \mathbb{Z}, b \in \mathbb{Z}, |\alpha|^2 + |\beta|^2 = 1$, and $|\boldsymbol{x}| = \sum_i x_i$.
\end{definition}

\begin{lemma}
\label{lem:CM}
    The $C$-$M_n(\theta)$ gate can be implemented using a quantum circuit of depth $O(\log n)$ without ancillary qubit.
\end{lemma}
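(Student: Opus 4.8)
The plan is to realize the weight-dependent phase using the fan-out gate $F_n$ together with single-qubit $Z(\theta)$ rotations, exploiting the identity
\[
n - 2|\boldsymbol{x}| = \sum_{i=1}^n (1 - 2x_i) = \sum_{i=1}^n (-1)^{x_i},
\]
so that the desired phase factors as $e^{i\theta(n-2|\boldsymbol{x}|)} = \prod_{i=1}^n e^{i\theta(-1)^{x_i}}$. A single-qubit $Z(\theta)$ applied to $x_i$ distinguishes $x_i=1$ from $x_i=0$ only by a factor $e^{i\theta}$ versus $1$, and moreover such a phase must be conditioned on the control qubit being in state $\ket{1}$. Fan-out lets us address both issues at once: conjugating the input register by $F_n$ (sourced at the control) entangles the phase pattern with the control bit, while a separate layer of $Z(-\theta)$ gates repairs the value seen when the control is $\ket{0}$.

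Concretely, I would build the circuit in four layers. First, apply $Z(-\theta)$ to every input qubit $x_i$; being diagonal and mutually parallel, this contributes the control-independent phase $e^{-i\theta|\boldsymbol{x}|}$ in depth $O(1)$. Second, apply $F_n$ with the control qubit as fan-out source and the $n$ input qubits as targets, sending $\ket{c}\ket{\boldsymbol{x}}$ to $\ket{c}\ket{x_1\oplus c,\dots,x_n\oplus c}$. Third, apply $Z(\theta)$ to all $n$ targets: on the branch $c=0$ the targets still hold $x_i$, so this adds $e^{i\theta|\boldsymbol{x}|}$ and cancels the first layer, whereas on the branch $c=1$ the targets hold $x_i\oplus1$, so the added phase is $e^{i\theta(n-|\boldsymbol{x}|)}$. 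Fourth, undo the fan-out by applying $F_n$ once more (it is self-inverse), restoring the input register to $\ket{\boldsymbol{x}}$. Tracking the two branches shows the $c=0$ component acquires no net phase while the $c=1$ component acquires $e^{-i\theta|\boldsymbol{x}|}\,e^{i\theta(n-|\boldsymbol{x}|)}=e^{i\theta(n-2|\boldsymbol{x}|)}$, which is exactly the action of $C$-$M_n(\theta)$ on $(\alpha\ket{0}+\beta\ket{1})\otimes\ket{\boldsymbol{x}}$.

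For the complexity, the two fan-out layers each cost depth $O(\log n)$ by Lemma~\ref{lem:fanout} and the $Z$ layers cost $O(1)$, so the total depth is $O(\log n)$; no ancillary qubit is introduced, since the fan-out construction of Lemma~\ref{lem:fanout} uses none. I expect the step requiring the most care to be the exact cancellation of the spurious phase on the $c=0$ branch: the initial $Z(-\theta)$ correction is precisely what kills the $e^{i\theta|\boldsymbol{x}|}$ that fan-out conjugation alone would leave there, and the part I would write out in full is the verification of this cancellation on an arbitrary computational-basis input $\boldsymbol{x}$ (together with linearity over the $\alpha,\beta$ superposition of the control), rather than on a single illustrative case.
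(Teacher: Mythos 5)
Your proof is correct and follows essentially the same construction as the paper: conjugate a layer of $Z(\theta)$ gates on the input register by two fan-out gates $F_n$ sourced at the control, plus a layer of $Z(-\theta)$ gates to cancel the input-dependent phase $e^{i\theta|\boldsymbol{x}|}$. The only (immaterial) difference is that you place the correction layer before the fan-out conjugation, whereas the paper appends it as a fourth step at the end.
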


\begin{proof}
The quantum circuit of $C$-$M_n(\theta)$ can be decomposed into three steps. Both Step 1 and Step 3 are fan-out gates on $n+1$ qubits, while Step 2 applies a $Z(\theta)$ gate to the last $n$ qubits. An example of $C$-$M_4(\theta)$ is depicted in Fig.~\ref{fig:M_n}. 

\begin{figure}[h]
    \begin{flushleft}
        \ \ \ \ \ \ \ \ \ \ \ \ \ \ \ \Qcircuit @C=0.2em @R=0.5em @!R {
        \lstick{\alpha \! \ket{0} \! + \! \beta \! \ket{1}} 
        & \push{\text{\circled{$\star$}}} \qw \qwx[1]
        & \qw &&& \ctrl{1} & \qw & \ctrl{1} & \qw & \rstick{\alpha \! \ket{0} \! + \! \beta e^{i\theta(4 - 2|\boldsymbol{x}|)} \! \ket{1}} \\
        \lstick{\ket{x_{1}}} & \multigate{3}{M_{4}(\theta)} & \qw &&& \targ \qwx[-1] & \gate{Z(\theta)} & \targ \qwx[-1] & \qw & \rstick{\ket{x_{1}}} \\
        \lstick{\ket{x_{2}}} & \ghost{M_{4}(\theta)} & \qw &\push{=}&& \targ \qwx[-1] & \gate{Z(\theta)} & \targ \qwx[-1] & \qw & \rstick{\ket{x_{2}}}\\
        \lstick{\ket{x_{3}}} & \ghost{M_{4}(\theta)} & \qw &&& \targ \qwx[-1] & \gate{Z(\theta)} & \targ \qwx[-1] & \qw & \rstick{\ket{x_{3}}}\\
        \lstick{\ket{x_{4}}} & \ghost{M_{4}(\theta)} & \qw &&& \targ \qwx[-1] & \gate{Z(\theta)} & \targ \qwx[-1] & \qw & \rstick{\ket{x_{4}}}\\
        }
    \end{flushleft}
    \caption{The quantum circuit of the $C$-$M_4(\theta)$ gate.}
    \label{fig:M_n}
\end{figure}
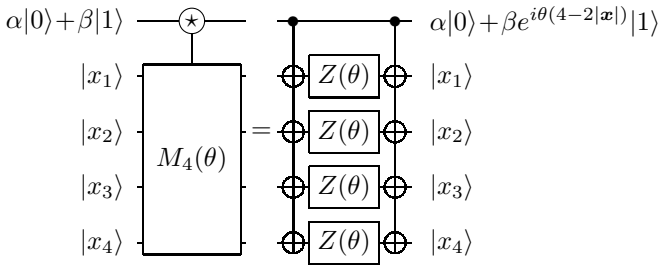

    For the correctness of this circuit, consider the initial state as follows: $(\alpha \ket{0}+\beta \ket{1}) \otimes \ket{x_1,\dots,x_n}.$
 After step 1, the quantum state is as follows:
    \begin{equation*}
        \alpha \ket{0}\ket{x_1,\dots,x_n} + \beta \ket{1}\ket{x_1\oplus 1,\dots,x_n \oplus 1}.
    \end{equation*}
    After step 2, the quantum state is as follows (the global phase $e^{i\theta |\boldsymbol{x}|}$ is omitted): $$
        %& \alpha e^{i\theta |\boldsymbol{x}|}\ket{0}\ket{x_1,\dots,x_n} + \beta e^{i\theta (n-|\boldsymbol{x}|)}\ket{1}\ket{x_1\oplus 1,\dots,x_n \oplus 1} \\
        %=&
        %e^{i\theta |\boldsymbol{x}|} \! \left(\alpha \! \ket{0} \! \ket{x_1,\dots,x_n} \! + \! \beta e^{i\theta (n-2|\boldsymbol{x}|)} \! \ket{1} \! \ket{x_1 \! \oplus \! 1,\dots,x_n \! \oplus \! 1} \right).
        \alpha \ket{0} \ket{x_1,\dots,x_n} + \beta e^{i\theta (n-2|\boldsymbol{x}|)} \ket{1} \ket{x_1 \oplus 1,\dots,x_n \oplus 1}.$$
    After the step 3, the quantum state is as follows: $$
        %&e^{i\theta |\boldsymbol{x}|} \left(\alpha \ket{0}\ket{x_1,\dots,x_n} + \beta e^{i\theta(n-2|\boldsymbol{x}|)}\ket{1}\ket{x_1,\dots,x_n} \right)\\
        %=& 
        \left( \alpha \ket{0}+\beta e^{i\theta (n-2|\boldsymbol{x}|)}\ket{1} \right) \otimes \ket{x_1,\dots,x_n}.
    $$
    We can now ignore the global phase and conclude the proof of correctness. Note that the circuit depth is directly derived from Lemma~\ref{lem:fanout}. The global phase can also be eliminated by adding a fourth step to the circuit: applying one $Z(-\theta)$ gate on each of the last $n$ qubits, which only increases the circuit depth by one.
\end{proof}

\begin{figure*}[h]
    \centerline{
        \Qcircuit @C=1em @R=0.4em @!R {
        \lstick{\ket{0}_1} & \qw & \gate{H} & \gate{Z(\frac{2n\pi}{2^{m+1}})} & \qw & \qw & \qw &\push{\cdots} && \qw & \push{\text{\circled{$\star$}}} \qw \qwx[1] & \multigate{4}{QFT_{m}^{\dagger}}  & \qw &\rstick{\ket{b_1}} \\
        \lstick{\ket{0}_2} & \qw & \gate{H} & \gate{Z(\frac{2n\pi}{2^{m}})} & \qw & \qw & \qw &\push{\cdots} && \push{\text{\circled{$\star$}}} \qw \qwx[1] & \qw \qwx[1] & \ghost{QFT_{m}^{\dagger}} & \qw &\rstick{\ket{b_2}} \\
        \lstick{\vdots} &&\push{\vdots} &\push{\vdots}&&&&\push{\cdots}&& \qwx[1] & \qwx[1] &&&\\
        \lstick{\ket{0}_{m-1}} & \qw & \gate{H} & \gate{Z(\frac{2n\pi}{2^{3}})} & \qw & \push{\text{\circled{$\star$}}} \qw \qwx[1] & \qw &\push{\cdots} && \qw \qwx[1] & \qw \qwx[1] & \ghost{QFT_{m}^{\dagger}} &  \qw &\rstick{\ket{b_{m-1}}} \\
        \lstick{\ket{0}_{m}} & \qw & \gate{H} & \gate{Z(\frac{2n\pi}{2^{2}})} & \push{\text{\circled{$\star$}}} \qw \qwx[1] & \qw \qwx[1] & \qw &\push{\cdots} && \qw \qwx[1] & \qw \qwx[1] & \ghost{QFT_{m}^{\dagger}} &  \qw &\rstick{\ket{b_m}} \\
        \lstick{\ket{\boldsymbol{x}}} & {/} \qw & \qw & \qw & \gate{M_n(\frac{-2\pi}{2^2})} & \gate{M_n(\frac{-2\pi}{2^3})} & \qw &\push{\cdots} && \gate{M_n(\frac{-2\pi}{2^{m}})} & \gate{M_n(\frac{-2\pi}{2^{m+1}})} & \qw & \qw & \rstick{\ket{\boldsymbol{x}}}
        }}
    \caption{The quantum circuit of $C_{n}$. The input $\boldsymbol{x} \in \{0,1\}^n$, $m=\lceil \log (n+1) \rceil$. The binary representation of $|\boldsymbol{x}|$ is $b_1 b_2 \cdots b_{m-1}$.}
    \label{fig:Counting}
\end{figure*}

A similar idea as Lemma~\ref{lem:CM} is utilized for implementing logic-AND in \cite{gidney2018halving}.

Note that the symbol of the $C$-$M_n(\theta)$ gate depicts one control qubit and $n$ target qubits. However, in reality, only the relative phase of the "control qubit" is altered. This notation is used for convenience in representing quantum circuits. Next, we need to extract the value of $|\boldsymbol{x}|$ from the qubit phase onto computational basis. Note that $\lceil \log (n+1) \rceil$ output qubits that initialized to $\ket{0}$ are necessary for storing all the bits of $|\boldsymbol{x}|$.

\begin{definition}
 Define quantum circuit $C_{n}$ compute the Hamming weight: $C_{n} \ket{0}^{\otimes \lceil \log (n + 1) \rceil} \ket{\boldsymbol{x}} = \ket{|\boldsymbol{x}|}\ket{\boldsymbol{x}}$, where $\boldsymbol{x} \in \{0,1\}^n, |\boldsymbol{x}|=\sum_i x_i$. The first $\lceil \log (n + 1) \rceil$ qubits are output qubits.
\end{definition}

\begin{theorem}
\label{the:Cnm}
    The Hamming weight of $n$-bit string can be computed by quantum circuit $C_{n}$ of depth $O(\log^2 n)$ without ancillary qubits.
\end{theorem}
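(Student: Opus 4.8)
The plan is to recognize that $C_n$ is precisely a phase-estimation / quantum-Fourier-transform procedure for reading out $|\boldsymbol{x}|$ from the phases produced by the $C$-$M_n(\theta)$ gates of Lemma~\ref{lem:CM}. First I would fix $m = \lceil \log(n+1) \rceil$ and prepare the $m$ output qubits in the uniform superposition by a layer of Hadamard gates, which has depth $1$. The key observation is that $C$-$M_n(\theta)$ imprints the phase $e^{i\theta(n - 2|\boldsymbol{x}|)}$ onto the $\ket{1}$ component of its control qubit, so by choosing the angles $\theta_j = -2\pi/2^{j+1}$ as in Fig.~\ref{fig:Counting} and applying one $C$-$M_n(\theta_j)$ controlled by the $j$-th output qubit, each output qubit $j$ acquires a relative phase proportional to $|\boldsymbol{x}| \cdot 2\pi / 2^{\,\text{(bit position)}}$. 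The preliminary $Z(2n\pi/2^{k})$ rotations on the output qubits cancel the $n$-dependent part of the phase $e^{i\theta n}$, leaving exactly the phase pattern $e^{-2\pi i |\boldsymbol{x}| \cdot b_k / 2^{m}}$ that the inverse quantum Fourier transform $QFT_m^{\dagger}$ converts into the binary encoding $\ket{|\boldsymbol{x}|}$ on the output register.

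The main steps, in order, are: (1) apply Hadamards to the $m$ output qubits; (2) apply the fixed single-qubit $Z$ rotations to compensate the global $e^{i\theta n}$ phase contributions; (3) apply the sequence of $C$-$M_n(-2\pi/2^{j+1})$ gates, each using one output qubit as the phase-carrying control and the $n$ input qubits as targets, so that the output register ends in the state $\bigotimes_k \tfrac{1}{\sqrt 2}\bigl(\ket{0} + e^{2\pi i |\boldsymbol{x}| / 2^{k}}\ket{1}\bigr)$; and (4) apply $QFT_m^{\dagger}$ to decode $|\boldsymbol{x}|$ onto the computational basis. The correctness argument is the standard phase-estimation identity: since $0 \le |\boldsymbol{x}| \le n < 2^{m}$, the value $|\boldsymbol{x}|$ has an exact $m$-bit binary expansion, so phase estimation returns it with no error and the input register $\ket{\boldsymbol{x}}$ is left unchanged (each $C$-$M_n$ restores the input by Lemma~\ref{lem:CM}). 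I would verify that no genuine ancillary qubits are consumed: the $m$ output qubits are precisely the register that stores the answer, and the $C$-$M_n$ gates act in place on $\ket{\boldsymbol{x}}$ without borrowing any extra wires.

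For the depth bound I would account for each stage separately. Each $C$-$M_n(\theta_j)$ gate costs depth $O(\log n)$ by Lemma~\ref{lem:CM}, and there are $m = O(\log n)$ of them; if applied sequentially this already gives $O(\log^2 n)$, so no further parallelization is needed, though I would remark that the single-qubit $Z$ and $H$ layers add only $O(1)$ depth. The inverse Fourier transform $QFT_m^{\dagger}$ on $m = O(\log n)$ qubits can be implemented in depth $O(\log^2 m) = O(\log^2\log n)$, which is dominated by the $O(\log^2 n)$ term. Summing gives total depth $O(\log^2 n)$.

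The main obstacle I anticipate is bookkeeping the phases precisely: I must confirm that the fixed $Z(2n\pi/2^{k})$ corrections exactly cancel the $n$-dependent phase $e^{i\theta_j n}$ produced by each $C$-$M_n$ gate, and that the surviving $|\boldsymbol{x}|$-dependent phases align with the bit-position conventions so that $QFT_m^{\dagger}$ reads out the correct binary string $b_1 b_2 \cdots b_m$. A secondary subtlety is ensuring the sharp counting bound $|\boldsymbol{x}| \le n \le 2^{m}-1$ so that phase estimation is exact rather than approximate; this is what lets the $QFT_m^{\dagger}$ produce a deterministic computational-basis state and is what pins down $m = \lceil \log(n+1) \rceil$ as both sufficient and necessary.
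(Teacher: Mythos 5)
Your proposal is correct and follows essentially the same route as the paper's own proof: Hadamards plus fixed $Z$ rotations to cancel the $n$-dependent phase, a sequential cascade of $C$-$M_n(\theta)$ gates (each of depth $O(\log n)$ by Lemma~\ref{lem:CM}) to imprint $e^{2\pi i |\boldsymbol{x}|/2^{k}}$ on the $m = \lceil \log(n+1) \rceil$ output qubits, and an exact $QFT_m^{\dagger}$ readout justified by $|\boldsymbol{x}| \le n < 2^m$. One small caution: your claim that $QFT_m^{\dagger}$ can be done in depth $O(\log^2 m)$ without ancillas is unsupported (the paper simply uses the standard $O(m)$-depth circuit), but this is harmless since even depth $O(m^2) = O(\log^2 n)$ for the QFT would preserve the stated bound.
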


\begin{proof}
The quantum circuit for $C_{n}$ is shown in Fig.~\ref{fig:Counting}. The input $\boldsymbol{x} \in \{0,1\}^n$, and let $m = \lceil \log (n+1) \rceil$. The binary representation of $|\boldsymbol{x}|$ can be expressed by $|\boldsymbol{x}|_{(2)} = b_{1} b_{2} \cdots b_{m}$. 

To verify its correctness, we consider a specific output qubit $\ket{0}_j$ for $j=1,2,\dots,m$. Initially, we apply an $H$ gate and a $Z(\frac{2n\pi}{2^{m-j+2}})$ gate to this qubit. As a result, the quantum state is transformed to $\ket{0} \to \left(\ket{0} + e^{\frac{2in\pi}{2^{m-j+2}}} \ket{1} \right)/\sqrt{2}.$ 

Afterward, we apply a $C$-$M_n(\frac{-2\pi}{2^{m-j+2}})$ gate. According to Lemma~\ref{lem:CM}, the quantum state is transformed to:
\begin{align*}
    &\left(\ket{0} + e^{\frac{2in\pi}{2^{m-j+2}}-\frac{2i\pi}{2^{m-j+2}}(n-2|\boldsymbol{x}|)} \ket{1} \right)/\sqrt{2} \\
    =& \left(\ket{0} + e^{\frac{2i|\boldsymbol{x}|\pi}{2^{m-j+1}}}\ket{1} \right)/\sqrt{2}.
\end{align*}

Note that we have already encoded the information of $|\boldsymbol{x}|$ into the phase of $m$ output qubits. Next, we can use $QFT^{\dagger}_m$ to transpose $|\boldsymbol{x}|$ into computational basis. After applying $QFT^{\dagger}_{m}$ to the $m$ qubits, the resulting quantum state is $\ket{b_1}\ket{b_2}\cdots \ket{b_{m}}\ket{\boldsymbol{x}}=\ket{|\boldsymbol{x}|}\ket{\boldsymbol{x}}.$
Note that the depth of $QFT_m$ is $O(m)$ \cite{nielsen2010quantum}. Therefore, by Lemma~\ref{lem:CM}, the depth of the circuit in Fig.~\ref{fig:Counting} is $O(\log^2 n)$.
\end{proof}

    \begin{figure*}[h]
    \centerline{
        \Qcircuit @C=1em @R=0.2em @!R {
        \lstick{\ket{x_1}} & \ctrl{7} & \qw & \qw \barrier[-1.6em]{7} & \qw & \qw & \push{\cdots} && \qw & \multigate{6}{G_3} & \qw & \qw & \push{\cdots} && \qw & \multigate{6}{G_3^{\dagger}} & \qw \\
        \lstick{\ket{x_2}} & \qw & \ctrl{6} & \qw & \qw & \qw & \push{\cdots} && \qw & \ghost{G_3} & \qw & \qw & \push{\cdots} && \qw & \ghost{G_3^{\dagger}} & \qw \\
        \lstick{\ket{x_3}}  & \qw & \qw & \ctrl{5} & \qw & \qw & \push{\cdots} && \qw & \ghost{G_3} & \qw & \qw & \push{\cdots} && \qw & \ghost{G_3^{\dagger}} & \qw \\
        \lstick{\ket{a_1}}  & \qw & \qw & \qw & \ctrl{4} & \qw & \push{\cdots} && \qw & \ghost{G_3} & \qw & \qw & \push{\cdots} && \ctrl{4} & \ghost{G_3^{\dagger}} & \qw \\
        \lstick{\ket{a_2}}  & \qw & \qw & \qw & \qw & \qw & \push{\cdots} && \qw & \ghost{G_3} & \qw & \qw & \push{\cdots} && \qw & \ghost{G_3^{\dagger}} & \qw \\
        \lstick{\ket{a_3}}  & \qw & \qw & \qw & \qw & \qw & \push{\cdots} && \qw & \ghost{G_3} & \qw & \qw & \push{\cdots} && \qw & \ghost{G_3^{\dagger}} & \qw \\
        \lstick{\ket{a_4}}  & \qw & \qw & \qw & \qw & \qw & \push{\cdots} && \ctrl{1} & \ghost{G_3} & \ctrl{1} & \qw & \push{\cdots} && \qw & \ghost{G_3^{\dagger}} & \qw \\
        \lstick{\ket{t}}  & \gate{X^{c_1}}  & \gate{X^{c_2}} & \gate{X^{c_4}} & \gate{X^{c_3}} & \qw & \push{\cdots} && \gate{X^{c_7}} & \qw & \gate{X^{c_7}} & \qw & \push{\cdots} &&  \gate{X^{c_3}} & \gate{X^{c_0}} & \qw
        }}
    \caption{The circuit that implements a symmetric function $f:\{0,1\}^3 \to \{0,1\}$.}
    \label{fig:ESOP}
\end{figure*}
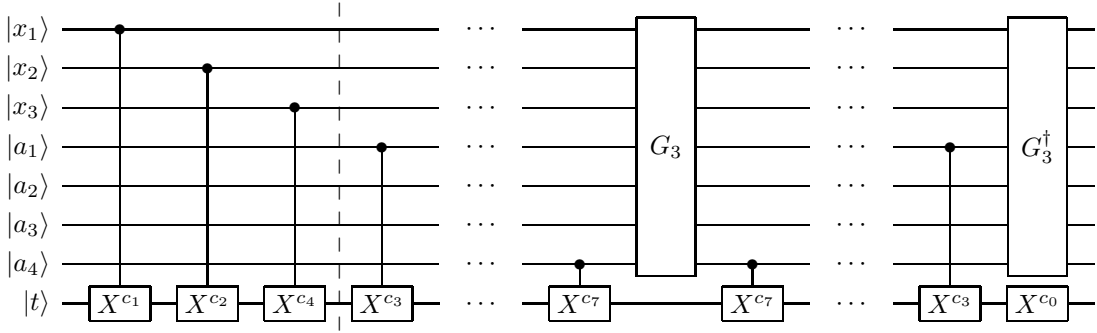
    
    \begin{figure}[h]
    \begin{flushleft}
        \ \ \ \ \ \ \ \ \ \ \ \ \ \ \ \Qcircuit @C=1.2em @R=0.4em @!R {
        \lstick{\ket{x_1}} & \qw & \ctrl{1} & \ctrl{2} & \qw & \qw & \qw & \rstick{\ket{x_1}}\\
        \lstick{\ket{x_2}} & \qw & \ctrl{2} & \qw & \ctrl{1} & \qw & \qw & \rstick{\ket{x_2}}\\
        \lstick{\ket{x_3}}  & \ctrl{1} & \qw & \ctrl{2} & \ctrl{3} & \ctrl{1} & \qw & \rstick{\ket{x_3}}\\
        \lstick{\ket{a_1}}  & \ctrl{3} & \targ & \qw & \qw & \ctrl{3} & \qw & \rstick{\ket{a_1 \oplus x_1 x_2}}\\
        \lstick{\ket{a_2}}  & \qw & \qw & \targ & \qw & \qw & \qw & \rstick{\ket{a_2 \oplus x_1 x_3}}\\
        \lstick{\ket{a_3}}  & \qw & \qw & \qw & \targ & \qw & \qw & \rstick{\ket{a_3 \oplus x_2 x_3}}\\
        \lstick{\ket{a_4}}  & \targ & \qw & \qw & \qw & \targ & \qw & \rstick{\ket{a_4 \oplus x_1 x_2 x_3}}% \gategroup{1}{3}{4}{3}{.7em}{--} \gategroup{1}{4}{6}{5}{1em}{.}
        }
    \end{flushleft}
    \caption{The quantum circuit of $G_3$.}
    \label{fig:G_3}
    \end{figure}
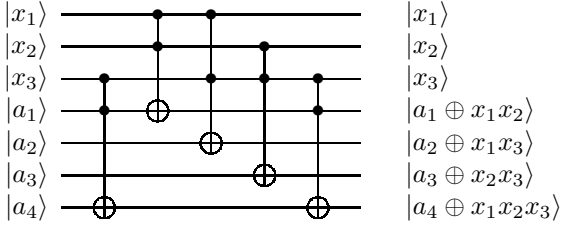

Our method for computing the Hamming weight can also be seen as based on the Phase Estimation Algorithm (PEA) \cite{nielsen2010quantum}. The utilization of PEA for computing the Hamming weight is well-known, as in previous research like \cite{wang2021preparing}. However, we are the first to offer a low-depth circuit implementation of PEA for computing the Hamming weight without ancillary qubits.

Theorem~\ref{the:Cnm} enables us to obtain the Hamming weight $|\boldsymbol{x}|$ from the input $\boldsymbol{x}$. For the majority function, we can easily implement by testing whether $|\boldsymbol{x}|\geq n/2$. For general symmetric functions, we need a new method to implement Boolean functions using borrowed ancillary qubits.

\begin{lemma}
    \label{lem:boolean_function}
    A Boolean function $f:\{0,1\}^n \to \{0,1\}$ can be implemented using a quantum circuit of depth $O(n^2)$ with $2^n - n - 1$ borrowed ancillary qubits.
\end{lemma}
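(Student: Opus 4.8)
The plan is to realize $f$ through its algebraic normal form (Reed--Muller expansion) and to compute all of its higher--order monomials in the borrowed ancillas. Write $f(\boldsymbol{x}) = \bigoplus_{S \subseteq [n]} c_S \prod_{i \in S} x_i$ with $c_S \in \{0,1\}$. The constant term ($S=\emptyset$) is added to $\ket{t}$ by a single $X^{c_\emptyset}$, and the $n$ linear terms ($|S|=1$) by CNOTs controlled by the individual $x_i$. Every remaining term has $|S|\ge 2$, and there are exactly $2^n - n - 1$ such subsets, matching the number of borrowed ancillas. I therefore assign one ancilla $a_S$ to each $S$ with $|S|\ge 2$ and build the subcircuit $G_n$ of Fig.~\ref{fig:G_3}, whose job is to realize $\ket{a_S} \to \ket{a_S \oplus \prod_{i\in S} x_i}$ \emph{simultaneously} for all these $S$, leaving the inputs $\boldsymbol{x}$ untouched.

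Granting $G_n$, the full circuit of Fig.~\ref{fig:ESOP} is assembled so that the unknown initial ancilla values cancel. First apply, for every $S$ with $c_S=1$, a CNOT from $a_S$ (still in its original state) into the target, contributing $\bigoplus_S c_S a_S^{\mathrm{orig}}$; then apply $G_n$; then repeat the same CNOT layer, now reading $a_S^{\mathrm{orig}}\oplus\prod_{i\in S}x_i$, so that the originals cancel and the target receives exactly $\bigoplus_{|S|\ge2} c_S\prod_{i\in S}x_i$. Finally apply $G_n^{\dagger}$ to restore every borrowed ancilla to its initial state, as required. Adding the linear and constant gates completes $t\mapsto t\oplus f(\boldsymbol{x})$.

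The crux is constructing $G_n$ on \emph{dirty} ancillas without leaving residual ``junk,'' which I would do recursively via $G_n = P_n\,G_{n-1}\,P_n$. For each $S=S'\cup\{n\}$ with $|S'|\ge 2$, the pass $P_n$ applies a Toffoli with controls $x_n,a_{S'}$ and target $a_S$; it is applied once before $G_{n-1}$ (when $a_{S'}=a_{S'}^{\mathrm{orig}}$) and once after (when $a_{S'}=a_{S'}^{\mathrm{orig}}\oplus\prod_{i\in S'}x_i$). The two copies cancel the spurious $x_n a_{S'}^{\mathrm{orig}}$ contribution and deposit $x_n\prod_{i\in S'}x_i=\prod_{i\in S}x_i$ into $a_S$; size--$2$ products $a_{\{i,j\}}\to a_{\{i,j\}}\oplus x_i x_j$ are plain Toffolis with clean input controls. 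Correctness follows by induction on $n$, with base case $G_2$ equal to a single Toffoli; the $n=3$ instance is exactly Fig.~\ref{fig:G_3}. This yields the depth recurrence $D(n)=D(n-1)+O(\text{depth of one pass})$.

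The main obstacle is the depth of a single pass $P_n$: it consists of up to $2^{n-1}$ Toffolis that all share the common control $x_n$, so a naive serial implementation would be exponentially deep. To keep each pass at depth $O(n)$ I would broadcast $x_n$ to many copies with the fan--out gate of Lemma~\ref{lem:fanout} (depth $O(\log 2^{n})=O(n)$), run the otherwise--disjoint Toffolis in one parallel layer, and uncompute the copies; the Toffolis are disjoint in their remaining wires, so only the shared control needs fanning out. The delicate point is supplying scratch for this fan--out from within the $2^n-n-1$ borrowed ancillas without disturbing the qubits currently acting as controls or targets; since the target bound $O(n^2)$ is loose, I expect a level--by--level schedule that broadcasts $x_n$ in batches to suffice, giving $D(n)=\sum_{k\le n}O(k)=O(n^2)$ and hence a full circuit of depth $O(n^2)$ on $2^n-n-1$ borrowed ancillas.
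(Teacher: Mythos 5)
Your construction and its correctness argument coincide with the paper's: the same ESOP expansion, the same CNOT--$G_n$--CNOT--$G_n^{\dagger}$ sandwich so that the unknown initial ancilla values cancel, and the same recursion (a pass of Toffolis sharing control $x_{k+1}$ before and after $G_k$, plus plain Toffolis for the quadratic terms), which is exactly the circuit of Fig.~\ref{fig:G_4}. The genuine gap is in the depth analysis of a single pass, which is the crux of the whole lemma. Your plan --- fan out $x_{k+1}$ onto spare borrowed qubits, run the Toffolis in parallel against the copies, then uncompute --- fails as stated for two reasons. First, the scratch qubits are \emph{borrowed}: a fan-out produces $\ket{a_j \oplus x_{k+1}}$, not a copy of $x_{k+1}$, so a Toffoli controlled on such a qubit deposits the spurious term $a_j \cdot c_j$ onto its target; cancelling it requires an extra compute/uncompute round that you never describe. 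Second, and more fundamentally, at the top level of the recursion there are no spare qubits at all: pass $P_n$ uses $2(2^{n-1}-n)$ ancillas as controls and targets of the cubic-and-higher terms and another $n-1$ as targets of the quadratic terms, i.e.\ all $2^n-n-1$ borrowed ancillas, while every input qubit serves as a control. So ``supplying scratch from within the borrowed ancillas'' is exactly the point where your argument has no content; the level-by-level batching you conjecture would need the dirty-scratch cancellation trick and a proof that $O(1)$ batches suffice, none of which is supplied.

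The paper closes this hole with a different idea that needs no scratch whatsoever: the shared-control Toffoli circuit of Gokhale et al., shown in Fig.~\ref{fig:Toffoli_fanout}. Each Toffoli is decomposed into Clifford$+T$ gates in such a way that the shared control enters only through CNOTs; since all $2^k-1$ Toffolis of a pass share that control, these CNOTs merge into four fan-out gates $F_{2^k-1}$ acting directly on the Toffolis' own wires, which by Lemma~\ref{lem:fanout} cost depth $O(k)$ with zero ancillary qubits. This gives depth $O(k)$ per pass unconditionally, hence $O(n^2)$ overall. That decomposition is the missing idea; without it (or a fully worked-out batching argument), your proof does not establish the claimed depth bound.
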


\begin{proof}
    Assuming we can represent a Boolean function $f$ using its ESOP form \cite{sasao1993exmin2}: $f(\boldsymbol{x})=\bigoplus_{j=0}^{2^n-1} c_jf_j(\boldsymbol{x}), c_{j} \in \{0,1\}, f_j(\boldsymbol{x})=\bigwedge_{k=1}^n x_k^{b_k}, (j)_2=b_nb_{n-1}\cdots b_1.$

    There is a slight variation in the definition of the ESOP form used in this manuscript for the convenience of constructing quantum circuits. Note that the ESOP form is an XOR of multiple terms, and there are at most $2^n$ different terms. Our approach is to XOR each term of the ESOP form onto the target qubit $\ket{t}$, causing the target qubit $\ket{t}$ to become $\ket{t\oplus f(\boldsymbol{x})}$. 

    The quantum circuit that implements $f$ is divided into $3$ parts. The first part XORs all the linear terms onto the target qubit $\ket{t}$. For each $i \in \{1,2,\dots,n\}$, if $c_{2^i}=1$, we add a CNOT gate to the circuit with $x_i$ as control qubit and $\ket{t}$ as target qubit, indicating that $x_i$ is a term in the ESOP form of $f$. The second part requires $2^n-n-1$ borrowed ancillary qubits, where each ancillary qubit corresponds to a higher-order term. We add the gate $G_n$ to the circuit to XOR each higher-order term with its corresponding ancillary qubit. If $c_j=1$ for a higher-order term $f_j(\boldsymbol{x})$, we add a CNOT gate on both sides of $G_n$. The control qubit is the corresponding ancillary qubit for $f_j(\boldsymbol{x})$, and the target qubit is $\ket{t}$. Finally, we add $G_{n}^{\dagger}$ to restore all ancillary qubits. The third part adds an $X$ gate applied on $\ket{t}$ to the end of the circuit if $c_0=1$. This ensures correctness by noticing that all terms of the ESOP form have been XORed onto $\ket{t}$. Fig.~\ref{fig:ESOP} shows an example circuit for $n=3$. The Left of the dashed barrier corresponds to the first part. The circuit for $G_3$ is depicted in Fig.~\ref{fig:G_3}.
    
    Considering the implementation of $G_n$. The purpose of $G_n$ is to XOR all higher-order terms from the ESOP forms onto the corresponding borrowed ancillary qubits. Our approach to implementing $G_n$ is to start with $G_2$ and continuously use $G_k$ to implement $G_{k+1}$, eventually achieving $G_n$. For implementing $G_{k+1}$ based on $G_k$, we first use $k$ Toffoli gates to implement all quadratic terms containing $x_{k+1}$ (for example, the three Toffoli gates in the dashed box in Fig.~\ref{fig:G_4}). Then, we add $2^k - k - 1$ Toffoli gates before and after the circuit to implement all other terms. Fig.~\ref{fig:G_4} is an example circuit for implementing $G_4$. The correctness of the circuit comes from the fact that all monomials generated by $k+1$ variables comprise are merely monomials generated by $k$ variables along with these monomials multiplied by $x_{k+1}$.

    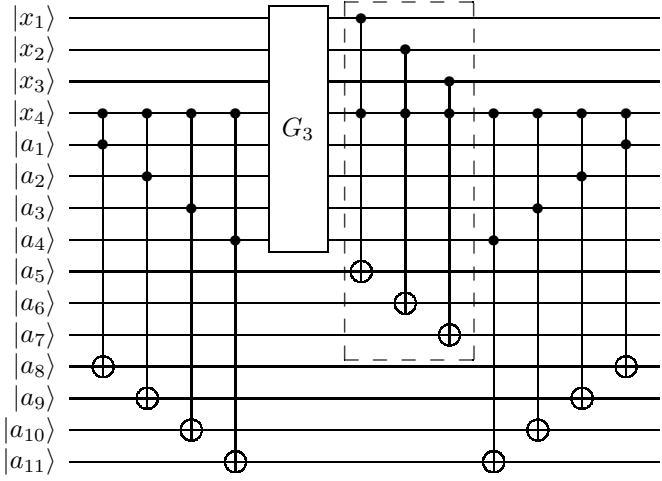
\begin{figure}[h]
    \begin{flushleft}
        \ \ \ \ \ \ \ \Qcircuit @C=0.85em @R=0.3em @!R {
        \lstick{\ket{x_1}} & \qw & \qw & \qw & \qw & \multigate{7}{G_3} & \ctrl{3} & \qw & \qw & \qw & \qw & \qw & \qw & \qw \\
        \lstick{\ket{x_2}} & \qw & \qw & \qw & \qw & \ghost{G_3} & \qw & \ctrl{2} & \qw & \qw & \qw & \qw & \qw & \qw \\
        \lstick{\ket{x_3}} & \qw & \qw & \qw & \qw & \ghost{G_3} & \qw & \qw & \ctrl{1} & \qw & \qw & \qw & \qw & \qw \\
        \lstick{\ket{x_4}} & \ctrl{1} & \ctrl{2} & \ctrl{3} & \ctrl{4} & \ghost{G_3} & \ctrl{5} & \ctrl{6} & \ctrl{7} & \ctrl{4} & \ctrl{3} & \ctrl{2} & \ctrl{1} & \qw \\
        \lstick{\ket{a_1}} & \ctrl{7} & \qw & \qw & \qw & \ghost{G_3} & \qw & \qw & \qw & \qw & \qw & \qw & \ctrl{7} & \qw \\
        \lstick{\ket{a_2}} & \qw & \ctrl{7} & \qw & \qw & \ghost{G_3} & \qw & \qw & \qw & \qw & \qw & \ctrl{7} & \qw & \qw \\
        \lstick{\ket{a_3}} & \qw & \qw & \ctrl{7} & \qw & \ghost{G_3} & \qw & \qw & \qw & \qw & \ctrl{7} & \qw & \qw & \qw \\
        \lstick{\ket{a_4}} & \qw & \qw & \qw & \ctrl{7} & \ghost{G_3} & \qw & \qw & \qw & \ctrl{7} & \qw & \qw & \qw & \qw \\
        \lstick{\ket{a_5}} & \qw & \qw & \qw & \qw & \qw & \targ & \qw & \qw & \qw & \qw & \qw & \qw & \qw \\
        \lstick{\ket{a_6}} & \qw & \qw & \qw & \qw & \qw & \qw & \targ & \qw & \qw & \qw & \qw & \qw & \qw \\
        \lstick{\ket{a_7}} & \qw & \qw & \qw & \qw & \qw & \qw & \qw & \targ & \qw & \qw & \qw & \qw & \qw \\
        \lstick{\ket{a_8}} & \targ & \qw & \qw & \qw & \qw & \qw & \qw & \qw & \qw & \qw & \qw & \targ & \qw \\
        \lstick{\ket{a_9}} & \qw & \targ & \qw & \qw & \qw & \qw & \qw & \qw & \qw & \qw & \targ & \qw & \qw \\
        \lstick{\ket{a_{10}}} & \qw & \qw & \targ & \qw & \qw & \qw & \qw & \qw & \qw & \targ & \qw & \qw & \qw \\
        \lstick{\ket{a_{11}}} & \qw & \qw & \qw & \targ & \qw & \qw & \qw & \qw & \targ & \qw & \qw & \qw & \qw \gategroup{1}{7}{11}{9}{1em}{--}
        }
    \end{flushleft}
    \caption{The quantum circuit of $G_4$.}
    \label{fig:G_4}
    \end{figure}
    
    To analyze circuit depth, we focus on the quantum gates on the left of the $G_n$ gate. There will be at most $2^n-1$ CNOT gates, acting on the target qubit $\ket{t}$ with distinct control qubits. Such a circuit can be implemented using a quantum parity gate with $O(n)$ circuit depth \cite{fang2003quantum}. The parallel implementation of quantum gates between $G_n$ and $G_{n}^{\dagger}$ follows a similar approach.

    For circuit depth of $G_n$, note that when constructing the circuit for $G_{k+1}$, we add $2^k-1$ Toffoli gates on the right side of the circuit for $G_k$. These Toffoli gates share a common control qubit $x_{k+1}$, while the other control qubits and target qubits are all distinct. Such a circuit can be implemented in parallel using quantum fan-out gates, as shown in Fig.~\ref{fig:Toffoli_fanout}. This shared-control Toffoli gate technique was proposed by Gokhale et al. \cite{gokhale2021quantum}. Using this technique to implement $2^k-1$ Toffoli gates, the circuit depth is primarily contributed by the four quantum fan-out gates $F_{2^k-1}$. According to Lemma~\ref{lem:fanout}, $2^k-1$ Toffoli gates only require a circuit depth of $O(k)$ for implementation. Similarly, the same method can be used for the $2^k-k-1$ Toffoli gates added on the left side of $G_k$. Therefore, $G_n$ can be implemented with a quantum circuit of depth $O(n^2)$.
\end{proof}

    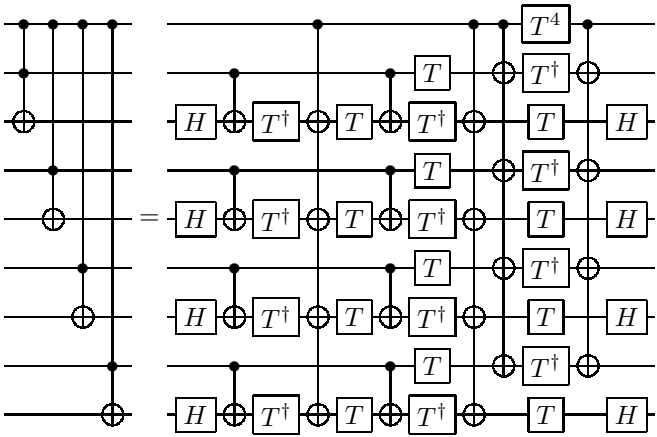
\begin{figure}[h]
    \centering
    \begin{flushleft}
        \Qcircuit @C=0.3em @R=0.4em @!R{
        & \ctrl{1} & \ctrl{3} & \ctrl{5} & \ctrl{7} & \qw &&& \qw & \qw & \qw & \ctrl{2} & \qw & \qw & \qw & \ctrl{2} & \ctrl{1} & \gate{T^4} & \ctrl{1} & \qw & \qw \\
        & \ctrl{1} & \qw & \qw & \qw & \qw &&& \qw & \ctrl{1} & \qw & \qw & \qw & \ctrl{1} & \gate{T} & \qw & \targ \qwx[2] & \gate{T^{\dagger}} & \targ \qwx[2] & \qw & \qw \\
        & \targ & \qw & \qw & \qw & \qw &&& \gate{H} & \targ & \gate{T^{\dagger}} & \targ \qwx[2] & \gate{T} & \targ & \gate{T^{\dagger}} & \targ \qwx[2] & \qw & \gate{T} & \qw & \gate{H} & \qw \\
        & \qw & \ctrl{1} & \qw & \qw & \qw &&& \qw & \ctrl{1} & \qw & \qw & \qw & \ctrl{1} & \gate{T} & \qw & \targ \qwx[2] & \gate{T^{\dagger}} & \targ \qwx[2] & \qw & \qw \\
        & \qw & \targ & \qw & \qw & \qw &\push{=}&& \gate{H} & \targ & \gate{T^{\dagger}} & \targ \qwx[2] & \gate{T} & \targ & \gate{T^{\dagger}} & \targ \qwx[2] & \qw & \gate{T} & \qw & \gate{H} & \qw \\
        & \qw & \qw & \ctrl{1} & \qw & \qw &&& \qw & \ctrl{1} & \qw & \qw & \qw & \ctrl{1} &\gate{T} & \qw & \targ \qwx[2] & \gate{T^{\dagger}} & \targ \qwx[2] & \qw & \qw  \\
        & \qw & \qw & \targ & \qw & \qw &&& \gate{H} & \targ & \gate{T^{\dagger}} & \targ \qwx[2] & \gate{T} & \targ & \gate{T^{\dagger}} & \targ \qwx[2] & \qw & \gate{T} & \qw & \gate{H} & \qw \\
        & \qw & \qw & \qw & \ctrl{1} & \qw &&& \qw & \ctrl{1} & \qw & \qw & \qw & \ctrl{1} & \gate{T} & \qw & \targ & \gate{T^{\dagger}} & \targ & \qw & \qw  \\
        & \qw & \qw & \qw & \targ & \qw &&& \gate{H} & \targ & \gate{T^{\dagger}} & \targ & \gate{T} & \targ & \gate{T^{\dagger}} & \targ & \qw & \gate{T} & \qw & \gate{H} & \qw
    }
    \end{flushleft}
    \caption{The quantum circuit of four shared-control Toffoli gates.}
    \label{fig:Toffoli_fanout}
    \end{figure}

    \begin{figure}[h]
    \begin{flushleft}
        \ \ \ \ \ \ \ \ \ \Qcircuit @C=1em @R=0.4em @!R {
        \lstick{\ket{x_{1:7}}} & {/} \qw & \multigate{5}{C_8} & \qw & \qw & \qw & \qw & \qw & \multigate{5}{C_8^{\dagger}} & \qw \\
        \lstick{\ket{x_8}} & \qw & \ghost{C_8} & \qw & \ctrl{1} & \multigate{4}{U_1} & \ctrl{1} & \multigate{4}{U_1} & \ghost{C_8^{\dagger}} & \qw \\
        \lstick{\ket{0}_1}  & \qw & \ghost{C_8} & \qw & \ctrl{4} & \ghost{U_1} & \ctrl{4} & \ghost{U_1} & \ghost{C_8^{\dagger}} & \qw \\
        \lstick{\ket{0}_2}  & \qw & \ghost{C_8} & \multigate{3}{U_0} & \qw & \ghost{U_1} & \qw & \ghost{U_1} & \ghost{C_8^{\dagger}} & \qw \\
        \lstick{\ket{0}_3}  & \qw & \ghost{C_8} & \ghost{U_0} & \qw & \ghost{U_1} & \qw & \ghost{U_1} & \ghost{C_8^{\dagger}} & \qw \\
        \lstick{\ket{0}_4}  & \qw & \ghost{C_8} & \ghost{U_0} & \qw & \ghost{U_1} & \qw & \ghost{U_1} & \ghost{C_8^{\dagger}} & \qw \\
        \lstick{\ket{t}}  & \qw & \qw & \ghost{U_0} & \targ & \qw & \targ & \qw & \qw & \qw 
        }
    \end{flushleft}
    \caption{The quantum circuit of a $8$-bit symmetric function.}
    \label{fig:symmetric}
    \end{figure}
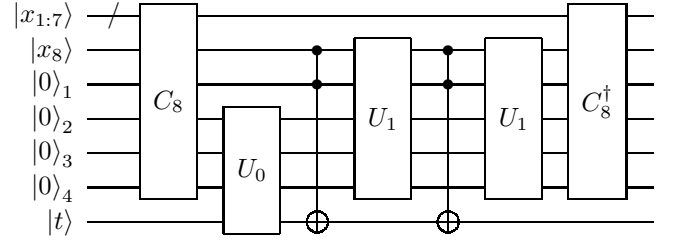

% Lemma~\ref{lem:boolean_function} is also of independent interest in the field of Boolean function synthesis.

\begin{theorem}
    \label{the:qubit}
    A symmetric function $f(\boldsymbol{x})$ can be implemented using a quantum circuit of depth $O(\log^2 n)$ with $\lceil \log (n+1) \rceil$ clean ancillary qubits.
\end{theorem}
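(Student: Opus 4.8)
The plan is to compose three stages: compute the Hamming weight $|\boldsymbol{x}|$ into the clean ancillas, evaluate the symmetric function as an ordinary Boolean function of the binary-encoded weight, and then uncompute the weight to reset the ancillas to $\ket{0}$. Concretely, set $m = \lceil \log(n+1)\rceil$ and apply the circuit $C_n$ of Theorem~\ref{the:Cnm} to realize $\ket{0}^{\otimes m}\ket{\boldsymbol{x}} \to \ket{|\boldsymbol{x}|}\ket{\boldsymbol{x}}$ in depth $O(\log^2 n)$. Because $f$ is symmetric, $f(\boldsymbol{x}) = g(b_1,\ldots,b_m)$, where $b_1\cdots b_m$ is the binary representation of $|\boldsymbol{x}|$ and $g:\{0,1\}^m \to \{0,1\}$ need only be specified on the $n+1$ attainable weights $\{0,\ldots,n\}$. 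After XORing $g$ onto $\ket{t}$ I would apply $C_n^{\dagger}$, again of depth $O(\log^2 n)$, to restore the $m$ ancillas to $\ket{0}$, so the whole construction uses exactly $\lceil \log(n+1)\rceil$ clean ancillas.

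The central step is to evaluate $g$ on the $m$-qubit weight register while borrowing the original $n$ input qubits as ancillas, and this is where I expect the main obstacle to lie. The natural move is to invoke Lemma~\ref{lem:boolean_function} directly, which gives depth $O(m^2) = O(\log^2 n)$; however, it consumes $2^m - m - 1$ borrowed qubits, and since $m = \lceil \log(n+1)\rceil$ can force $2^m$ as large as roughly $2n$, this exceeds the $n$ input qubits actually available. I would resolve the shortfall by exploiting the promise $|\boldsymbol{x}| \le n$ through conditioning on the most significant weight bit $b_1$ (the $2^{m-1}$ place). Writing $g = g_0(b_2,\ldots,b_m) \oplus b_1 \cdot h(b_2,\ldots,b_m)$ with $h = g|_{b_1=0} \oplus g|_{b_1=1}$, the term $g_0$ is an $(m-1)$-bit function implementable by Lemma~\ref{lem:boolean_function} with $2^{m-1}-m$ borrowed qubits, while the correction $b_1\cdot h$ is an XOR of monomials each containing $b_1$, implementable by the same $G_n$-style shared-control construction with at most $2^{m-1}-1$ borrowed qubits. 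Since $m = \lceil \log(n+1)\rceil$ guarantees $2^{m-1} \le n$, both blocks fit within the $n$ borrowed input qubits, and as they run sequentially they can reuse the same qubits.

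It then remains to check that the two sub-circuits compose correctly and stay within depth $O(\log^2 n)$. Each of the $g_0$ block and the $b_1$-controlled correction has depth $O((m-1)^2) = O(\log^2 n)$ by the parity-gate and shared-control-Toffoli parallelization already used in Lemma~\ref{lem:boolean_function}, so stacking the weight computation, the two evaluation blocks, and the uncomputation keeps the total depth $O(\log^2 n)$. Correctness of the decomposition follows because for every attainable weight exactly one of $b_1=0$ or $b_1=1$ holds, the two branches reproduce $g|_{b_1=0}$ and $g|_{b_1=1}$, and the don't-care patterns in $\{n+1,\ldots,2^m-1\}$ are never reached and may be fixed arbitrarily. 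I expect the most delicate bookkeeping to be the controlled correction: verifying that each monomial of $b_1\cdot h$ is materialized on the borrowed register and then uncomputed without disturbing the input state $\ket{\boldsymbol{x}}$, exactly mirroring the $G_n$/$G_n^{\dagger}$ restoration argument carried out for Lemma~\ref{lem:boolean_function}.
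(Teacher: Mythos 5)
Your proposal is correct and follows essentially the same route as the paper: compute $|\boldsymbol{x}|$ into the $m=\lceil\log(n+1)\rceil$ clean ancillas via $C_n$ (Theorem~\ref{the:Cnm}), observe that a direct application of Lemma~\ref{lem:boolean_function} on $m$ bits would need more than $n$ borrowed qubits, split off the most significant weight bit as $g = g_0 \oplus b_1 g_1$ so that each block is an $(m-1)$-bit function fitting within the $n$ input qubits used as borrowed ancillas, and uncompute with $C_n^{\dagger}$. The only divergence is the realization of the correction term $b_1 g_1$ — the paper XORs $g_1$ onto a single borrowed input qubit $\ket{a}$ and sandwiches this with two Toffoli gates controlled on $b_1$ and $\ket{a}$ (so the unknown value of $a$ cancels), while you propose materializing the $b_1$-containing monomials directly with shared-control gates — an implementation-level difference that does not change the decomposition, the ancilla accounting, or the $O(\log^2 n)$ depth bound.
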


\begin{proof}
    Firstly, we construct the ESOP form of $g(|\boldsymbol{x}|)$, where $g(|\boldsymbol{x}|)=f(\boldsymbol{x})$. This step can be done by first constructing the truth table of $g(|\boldsymbol{x}|)$, and then transforming the truth table into the ESOP form mentioned in Lemma~\ref{lem:boolean_function}. The classical computation complexity can be bounded by $n^{O(1)}$ \cite{sasao1993exmin2}. Define $m=\lceil \log (n+1) \rceil$. We use $|\boldsymbol{x}|_1$ to denote the most significant bit of $|\boldsymbol{x}|$, and $|\boldsymbol{x}|_{2:m}$ to denote the remaining $m-1$ bits. The strategy is, we divide the terms of $g$ into two parts based on whether they contain $|\boldsymbol{x}|_1$, i.e. $g(|\boldsymbol{x}|) = g_0(|\boldsymbol{x}|_{2:m}) + |\boldsymbol{x}|_1 g_1(|\boldsymbol{x}|_{2:m})$ where $g_0$ and $g_1$ are Boolean functions that depends on $m-1$ variables. %These two Boolean functions can be computed by the circuit of Lemma~\ref{lem:boolean_function}.

    We construct the quantum circuit as follows: First, we create a circuit that prepares $|x|$ on $m$ clean ancillary qubits, as described in Theorem~\ref{the:Cnm}. It is worth noting that by utilizing the $n$ input qubits as borrowed ancillary qubits, we can implement $g_0$ and $g_1$ through Lemma~\ref{lem:boolean_function}. The quantum circuit continues as follows: compute the result of $g_0(|\boldsymbol{x}|_{1:m-1})$ on $\ket{t}$ (the target qubit); add a Toffoli gate whose control qubits are $|\boldsymbol{x}|_m$ and a input qubit $\ket{a}$ that not used in previous step, target qubit is $\ket{t}$; compute the result of $g_1(|\boldsymbol{x}|_{1:m-1})$ on $\ket{a}$; add a Toffoli gate same as the previous step; compute the result of $g_1(|\boldsymbol{x}|_{1:m-1})$ on $\ket{a}$; recover the clean ancillary qubits. Our circuit first adds the result of $g_0(|\boldsymbol{x}|_{2:m})$ to $\ket{t}$. Then it uses a borrowed ancillary qubit $\ket{a}$ to add the result of $|\boldsymbol{x}|_0 g_1(|\boldsymbol{x}|_{2:m})$ to $\ket{t}$. Finally, we recover the state of $\ket{a}$ and $m$ clean ancillary qubits.

To see the correctness of our circuit, the circuit of $8$-bit symmetric function is depicted in Fig.~\ref{fig:symmetric}. First, $C_8$ compute $|\boldsymbol{x}| = \sum_{i=1}^8 x_i$ and store it in four clean ancillary qubits according to Theorem~\ref{the:Cnm}. Note that we have $f(\boldsymbol{x}) = g(|\boldsymbol{x}|) = g_0(|\boldsymbol{x}|_{2:4}) + |\boldsymbol{x}|_1 g_1(|\boldsymbol{x}|_{1:3})$. According to Lemma~\ref{lem:boolean_function}, we could construct $U_0$ to transform $\ket{t}$ to $\ket{t \oplus g_0(|\boldsymbol{x}|_{2:4})}$, and construct $U_1$ to transform $\ket{x_8}$ to $\ket{x_8 \oplus g_1(|\boldsymbol{x}|_{1:3})}$. Thus the transformation of target qubit $\ket{t}$ is as follows:
\begin{align*}
    \ket{t} & \to \ket{t \oplus g_0(|\boldsymbol{x}|_{2:4})}, \text{after $U_0$} \\
    & \to \ket{t \oplus g_0(|\boldsymbol{x}|_{2:4}) \oplus x_8 |\boldsymbol{x}|_1}, \text{after first Toffoli gate} \\
    & \to \ket{t \oplus g_0(|\boldsymbol{x}|_{2:4}) \oplus x_8 |\boldsymbol{x}|_1 \oplus (x_8 \oplus g_1(|\boldsymbol{x}|_{2:4}))|\boldsymbol{x}|_1} \\
    & = \ket{t \oplus g_0(|\boldsymbol{x}|_{2:4}) \oplus |\boldsymbol{x}|_1g_1(|\boldsymbol{x}|_{2:4})} \\
    & = \ket{t \oplus f(\boldsymbol{x})}, \text{after second Toffoli gate.}
\end{align*}

The aim of computing $f(\boldsymbol{x})$ by two parts is to obtain enough borrowed ancillary qubits to use Lemma~\ref{lem:boolean_function}. The circuit depth is $O(\log^2 n)$ by Theorem~\ref{the:Cnm} and Lemma~\ref{lem:boolean_function}.
    
\end{proof}

\section{Synthesis of symmetric functions using qutrits}
\label{sec:qutrit}

%In Section \ref{sec:qubit}, we see how to implement symmetric functions in $O(\log^2 n)$ depth using $\lceil \log (n+1) \rceil$ ancillary qubits.
%However, there are physical realizations of quantum computing devices that natively support multiple energy levels. A qutrit is a quantum bit that has three energy levels, denoted by $\ket{0},\ket{1},\ket{2}$. By utilizing this extra energy level, one may save plenty of resources for the same computing task. 
In this section, we raise a method for the synthesis of symmetric functions with the help of higher energy levels. Concretely, given any symmetric function $f$, we can implement $f$ in $O(\log^2 n)$ depth in qutrit systems using only $1$ clean ancillary qutrit. Note that the input and output of $f$ are still $0,1$-valued, and the extra energy level $\ket{2}$ is only for helping internal computing steps.

%\subsection{Counting with one ancillary qutrit}
\subsection{Computing Hamming weight in qutrit systems}
The key idea of our design is to utilize the extra energy level to compute the Hamming weight in place, that is, the Hamming weight is placed on the input qutrits instead of the extra clean ancillary qutrit. As a result, we can compute Hamming weight by $O(\log n)$ depth circuit in qutrit systems with one ancillary qutrit, which may be of independent interest.

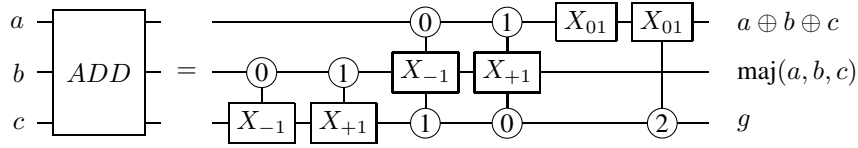
\begin{figure*}[h]
    \centerline{
        \Qcircuit @C=0.6em @R=0.4em @!R {
        \lstick{a} &\multigate{2}{ADD} &\qw &&& \qw & \qw & \push{\circled{0}} \qwx[1] \qw & \push{\circled{1}} \qwx[1] \qw & \gate{X_{01}} & \gate{X_{01}} \qwx[1] & \qw & \rstick{a\oplus b\oplus c}\\
        \lstick{b} &\ghost{ADD} &\qw &\push{=}&& \push{\circled{0}} \qwx[1] \qw & \push{\circled{1}} \qwx[1] \qw & \gate{X_{-1}} \qwx[1] & \gate{X_{+1}} \qwx[1] & \qw & \qw \qwx[1] & \qw &\rstick{\text{maj}(a,b,c)}\\
        \lstick{c} &\ghost{ADD} &\qw &&& \gate{X_{-1}} & \gate{X_{+1}} & \push{\circled{1}} \qw & \push{\circled{0}} \qw & \qw & \push{\circled{2}} \qw & \qw &\rstick{g}\\
    }}
    \caption{Qutrit full adder. $g$ represents garbage information.}
    \label{fig:qutrit_full_adder}
\end{figure*}

We define sets $Q_0, Q_1, \dots, Q_t$, where $t = \lceil \log (n+1) \rceil - 1$. The $n$ input qutrits are indexed by numbers in the set $\{1, 2, \dots, n\}$. An input qutrit only belongs to one set $Q_i$. The $k$-th input qutrit belonging to $Q_i$ is denoted by $k \in Q_i$. We want a qutrit in $Q_i$ to contribute $2^i$ to the total sum if its state is $\ket{1}$. Assuming the quantum state of the $k$-th qutrit is $\ket{y_k}$, our desired equation is: $|\boldsymbol{x}| = \sum_{i=0}^t 2^i \sum_{k\in Q_i}[y_k = 1]$, where $\boldsymbol{x} \in \{0,1\}^n$ is the input of symmetric function, $[y_k=1]=1$ if and only if $y_k=1$ otherwise $[y_k=1]=0$. Initially, $Q_0 = \{1, 2, \dots, n\}$ and $Q_i = \emptyset$ for $i \in \{1, 2, \dots, t\}$. If we can ensure $|Q_i| = 1$ for every $i \in \{0, 1, \dots, t\}$, the $t+1$ qutrits belonging to these sets will encode $|\boldsymbol{x}|$.

We design a qutrit full adder with inputs $a,b,c \in \{0,1\}$, represented by three qutrits. The circuit produces the outputs $a \oplus b \oplus c$ and $\text{majority}(a,b,c)$ as shown in Fig.~\ref{fig:qutrit_full_adder}. The circled number indicates the fire condition of the control qutrit. The qutrit full adder can be decomposed into a constant number of elementary gates \cite{zidac2023}, and its correctness can be verified through straightforward computation. It is worth noting that $a+b+c = a \oplus b \oplus c + 2 \cdot \text{majority}(a,b,c)$. This means that if we input three qutrits from $Q_i$ into the qutrit full adder, it will contribute one qutrit to $Q_i$ ($a \oplus b \oplus c$) and one qutrit to $Q_{i+1}$ ($\text{majority}(a,b,c)$). Therefore, we can repeatedly use the qutrit full adder to reduce the size of $|Q_i|$ if $|Q_i| \geq 3$. Algorithm \ref{alg:almost_counting} generate a quantum circuit to make $|Q_i| < 3$ for every set $Q_i$. This is the first phase of counting, denoted as the almost counting phase.

\begin{algorithm}
    \caption{Generating almost counting circuit in qutrit systems}\label{alg:almost_counting}
    \begin{algorithmic}[1]
        \State Let $C$ be an empty $n$-qutrit circuit whose qutrits are labelled by $\{1,2,\dots,n\}$
        \State $t\gets\lceil\log (n+1)\rceil-1, Q_0\gets \{1,2,\dots,n\},Q_1,\dots,Q_t\gets\emptyset$
        \While{$\exists i, |Q_i| \ge 3$} \label{algline:almost_counting_while}
            \State $M_0,\dots,M_{t-1}\gets\emptyset$ \Comment{Qubits in $M_i$ will be carries}
            \For{$i = 0,1,\dots,t-1$}
                \While{$|Q_i|\ge 3$}
                    \State Arbitrarily pick $a,b,c\in Q_i$
                    \State Add qutrit full adder on $a,b,c$ to $C$
                    \State $M_i\gets M_i\cup\{b\}$ \Comment{$b$ stores majority}
                    \State $Q_i\gets Q_i\setminus\{b,c\}$ \label{algline:almost_counting_reduction}
                \EndWhile
            \EndFor
            \For{$i=0,1,\dots, t-1$}
                \State $Q_{i+1}\gets Q_{i+1}\cup M_i$ \Comment{Add carries to the next level}
            \EndFor
        \EndWhile
    \end{algorithmic}
\end{algorithm}

\begin{lemma}\label{lem:almost_counting}
    The circuit $C$ generated by Algorithm \ref{alg:almost_counting} has the following properties: (1) With input $\ket{\boldsymbol{x}}$ in computational basis, the output of $C$ is pure state in computational basis denoted by $\ket{\boldsymbol{y}}$, and $|\boldsymbol{x}|=\sum_{i=0}^t2^i\sum_{k\in Q_i}[y_k=1]$. (2) There exists $0\le r\le t$ such that $1\le |Q_0|,\dots,|Q_{r-1}|\le 2$, and $|Q_r|,\dots,|Q_t|=0$.
\end{lemma}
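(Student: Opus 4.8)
The plan is to prove the two assertions separately, each by induction over the outer iterations of Algorithm~\ref{alg:almost_counting}. For assertion (1) I would first note that every gate occurring in the qutrit full adder of Fig.~\ref{fig:qutrit_full_adder} (the $X_{\pm1}$ and $X_{01}$ gates together with their controlled versions) is a permutation of the computational basis $\{0,1,2\}^n$. Hence $C$ sends a basis input $\ket{\boldsymbol{x}}$ to a single basis state $\ket{\boldsymbol{y}}$, which settles the ``pure state in computational basis'' part. To obtain the weight identity I would maintain, at each outer-iteration boundary, the invariant $|\boldsymbol{x}|=\sum_{i=0}^{t}2^{i}\sum_{k\in Q_i}[y_k=1]$ together with the side condition that every qutrit currently lying in some $Q_i$ is in state $\ket{0}$ or $\ket{1}$. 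Both hold initially, since $Q_0=\{1,\dots,n\}$, the other $Q_i$ are empty, and $y_k=x_k\in\{0,1\}$. For the inductive step I would verify that one adder on $a,b,c\in Q_i$ preserves them: it leaves the XOR $a\oplus b\oplus c\in\{0,1\}$ in $Q_i$, promotes the majority (also in $\{0,1\}$) to $Q_{i+1}$, and discards the garbage qutrit; using $a+b+c=(a\oplus b\oplus c)+2\,\mathrm{maj}(a,b,c)$, the contribution $2^i(a+b+c)$ is reproduced exactly as $2^i(a\oplus b\oplus c)+2^{i+1}\mathrm{maj}(a,b,c)$, so the sum is unchanged and only $0/1$-valued qutrits remain in the sets.

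For assertion (2) I would isolate three ingredients. First, the capacity sum $\Phi=\sum_{i=0}^{t}2^{i}|Q_i|$ is conserved across each outer iteration, because an adder removes two qutrits from $Q_i$ and inserts one into $Q_{i+1}$ and $2\cdot 2^{i}=2^{i+1}$. Since $\Phi=n$ initially and $n<2^{t+1}$, this forces $|Q_t|\le 1$ at all times, so the index triggering the outer \textbf{while} guard is always at most $t-1$ and is therefore actually reached by the inner loops. Second, termination: the total count $\sum_i|Q_i|$ strictly decreases by exactly the number of adders applied in a round (each adder discards one garbage qutrit), and at least one adder is applied per outer iteration, so this nonnegative integer cannot decrease forever. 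Third, at termination the \textbf{while} guard gives $|Q_i|\le 2$ for every $i$.

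The remaining and genuinely delicate point is the \emph{contiguity} of the nonempty sets, which produces the index $r$. I would establish the invariant that at every outer-iteration boundary the nonempty sets form a prefix $Q_0,\dots,Q_{s-1}$. The supporting observations are that reducing $Q_i$ in the first inner loop never empties a nonempty set (it always leaves $1$ or $2$ elements), and that a carry set $M_i$ is nonempty only when $Q_i$ had at least three elements, in which case $Q_i$ still retains an element. Consequently, once the carries are flushed in the second inner loop, $Q_s$ can become newly nonempty only if $Q_{s-1}$ overflowed, whereas $Q_{s+1},\dots,Q_t$ remain empty because their incoming carries originate from sets that were empty at the start of the round; thus the prefix grows by at most one and never develops a gap. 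I expect this prefix-preservation step to be the main obstacle, since it requires carefully tracking the interleaving of the two inner loops and the temporary buffers $M_i$, whereas the value identity and the termination argument are comparatively mechanical. Combining the prefix structure with the exit bound $|Q_i|\le2$ then yields the index $r$ with $1\le|Q_0|,\dots,|Q_{r-1}|\le 2$ and $|Q_r|,\dots,|Q_t|=0$, where $\Phi=n$ together with $|Q_i|\ge1$ for $i<r$ caps how far the prefix can extend.
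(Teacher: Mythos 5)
Your proposal is correct and follows essentially the same route as the paper's proof: condition (1) via reversibility of the qutrit full adder together with the identity $a+b+c=(a\oplus b\oplus c)+2\,\mathrm{maj}(a,b,c)$, and condition (2) from the two structural facts that an adder never empties a nonempty $Q_i$ and that any $Q_i$ with $|Q_i|\ge 3$ gets shrunk. The paper records only those two observations, so your additional ingredients---the conserved sum $\Phi=\sum_{i=0}^{t}2^{i}|Q_i|$ (which gives $|Q_t|\le 1$, so the inner loops, running only up to $i=t-1$, really do reach every offending index), the explicit termination count, and the prefix-contiguity invariant for the nonempty sets---are exactly the details the paper's terse argument leaves implicit, not a different method. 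One caveat, which concerns the statement rather than your argument: your $\Phi$-based cap only yields $r\le t+1$, and $r\le t$ is in fact not always attainable---for $n=2^{t+1}-1$ (e.g.\ $n=7$) the algorithm terminates with $|Q_0|=\dots=|Q_t|=1$, precisely the case Section~\ref{sec:analysis} exploits---so the lemma's bound should read $r\le t+1$; the paper's own proof does not establish $r\le t$ either.
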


\begin{proof}
    For the first condition, it is important to note that we only use qutrit full adder which is a reversible circuit. Therefore, if the input is a base state on computational basis then the output should be a base state on computational basis too. The equation in the first condition holds true at the beginning of the circuit. It can also be verified after applying any qutrit full adder. Regarding the second condition, we can make the following observations. Firstly, the qutrit full adder cannot reduce $|Q_i|$ to zero if $|Q_i| > 0$. Secondly, the circuit from Algorithm \ref{alg:almost_counting} can reduce $|Q_i|$ if $|Q_i| > 2$.
\end{proof}

The next step is the refining phase, where we aim to identify every $Q_i$ such that $|Q_i|=2$ and reduce its size to 1. In this phase, the quantum circuit resembles a ripple-carry adder \cite{cuccaro2004new}. If $|Q_i|=2$, we generate a qutrit in $Q_{i+1}$ from $Q_i$ until we produce the most significant bit of $|\boldsymbol{x}|$, which is stored in a clean ancillary qutrit. Then, we proceed to generate the remaining bits of $|\boldsymbol{x}|$.

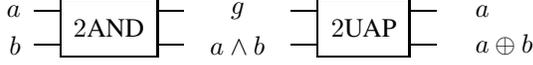
\begin{figure}[h]
    \centering
    \mbox{\Qcircuit @C=1em @R=0.4em @!R {
        \lstick{a} &\multigate{1}{2\text{AND}} &\qw &\push{g} &&\multigate{1}{2\text{UAP}} &\qw &\rstick{a}\\
        \lstick{b} &\ghost{2\text{AND}} &\qw &\push{a\wedge b} &&\ghost{2\text{UAP}} &\qw &\rstick{a\oplus b}\\
    }}
    \caption{$2$-bit AND gate and "Un-And and Parity" (2UAP) gate. $g$ represents garbage information.}
    \label{fig:2_and_uap_gate}
\end{figure}

    \begin{figure}[h]
    \centerline{
        \Qcircuit @C=1.2em @R=0.4em @!R {
        \lstick{\ket{a}} & \gate{X_{12}} & \gate{X_{+1}} \qwx[1] & \push{\circled{1}} \qwx[1] \qw & \qw & \rstick{\ket{g}}\\
        \lstick{\ket{b}} & \qw & \push{\circled{1}} \qw & \gate{X_{-1}} & \qw & \rstick{\ket{a \wedge b}}
        }}
    \caption{The quantum circuit of 2AND.}
    \label{fig:2AND}
    \end{figure}
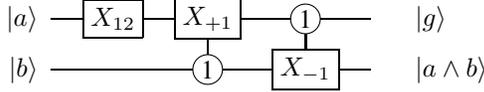

    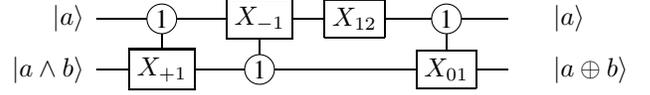
\begin{figure}[h]
    \centerline{
        \Qcircuit @C=1.2em @R=0.4em @!R {
        \lstick{\ket{a}} & \push{\circled{1}} \qwx[1] \qw & \gate{X_{-1}} \qwx[1] & \gate{X_{12}} & \push{\circled{1}} \qwx[1] \qw & \qw & \rstick{\ket{a}}\\
        \lstick{\ket{a \wedge b}} & \gate{X_{+1}} & \push{\circled{1}} \qw & \qw & \gate{X_{01}} & \qw & \rstick{\ket{a \oplus b}}
        }}
    \caption{The quantum circuit of 2UAP.}
    \label{fig:2UAP}
    \end{figure}

To generate a qutrit in $Q_{i+1}$ from two qutrits in $Q_{i}$, we utilize a $2$AND gate and a $2$UAP (Un-And and Parity) gate. The functionality of these gates is illustrated in Fig.~\ref{fig:2_and_uap_gate}. The $2$AND gate produces the logical AND of the two input bits, while the $2$UAP gate generates the XOR of the input bits. To implement the $2$AND gate, we consider the transformation $(a, b) \rightarrow (2a, b) \rightarrow (2a+b, b) \rightarrow (2a+b, b-[(2a+b)=1])$. It can be observed that $(b-[(2a+b)=1]) = a \wedge b$. Each step in this transformation can be implemented using elementary gates as shown in Fig.~\ref{fig:2AND}. The $2$UAP gate can be implemented by reversing the $2$AND gate and adding a $\ket{1}$-$X_{01}$ gate as shown in Fig.~\ref{fig:2UAP}.

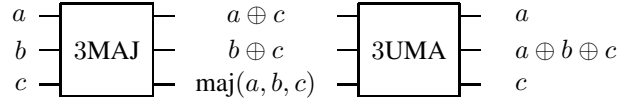
\begin{figure}[h]
    \begin{flushleft}
    \ \ \ \ \ \ \Qcircuit @C=0.8em @R=0.3em @!R {
        \lstick{a} &\multigate{2}{3\text{MAJ}} &\qw &\push{a\oplus c} &&\multigate{2}{3\text{UMA}} &\qw &\rstick{a}\\
        \lstick{b} &\ghost{3\text{MAJ}} &\qw &\push{b\oplus c} &&\ghost{3\text{UMA}} &\qw &\rstick{a\oplus b\oplus c}\\
        \lstick{c} &\ghost{3\text{MAJ}} &\qw &\push{\text{maj}(a,b,c)} &&\ghost{3\text{UMA}} &\qw &\rstick{c}\\
    }
    \end{flushleft}
    \caption{$3$-bit MAJority (3MAJ) and "Un-Majority and Add" (3UMA) gate in \cite{cuccaro2004new}.}
    \label{fig:3_maj_uma_gate}
\end{figure}

\begin{figure*}[h]
    \centering
    \mbox{\Qcircuit @C=1em @R=0.4em @!R {
        \lstick{\text{from} \ Q_0} &\multigate{1}{2\text{AND}} &\qw &\qw &\qw &\qw &\qw &\qw &\qw &\multigate{1}{2\text{UAP}} &\qw & \rstick{\text{in} \ Q_0}\\
        \lstick{\text{from} \ Q_0} &\ghost{2\text{AND}} &\multigate{1}{2\text{AND}} &\qw &\qw &\qw &\qw &\qw &\multigate{1}{2\text{UAP}} &\ghost{2\text{UAP}} &\qw\\
        \lstick{\text{from} \ Q_1} &\qw &\ghost{2\text{AND}} &\multigate{2}{3\text{MAJ}} &\qw &\qw &\qw &\multigate{2}{3\text{UMA}} &\ghost{2\text{UAP}} &\qw &\qw & \rstick{\text{in} \ Q_1}\\
        \lstick{\text{from} \ Q_2} &\qw &\qw &\ghost{3\text{MAJ}} &\qw &\qw &\qw &\ghost{3\text{UMA}} &\qw &\qw &\qw & \rstick{\text{in} \ Q_2}\\
        \lstick{\text{from} \ Q_2} &\qw &\qw &\ghost{3\text{MAJ}} &\multigate{1}{2\text{AND}} &\qw &\multigate{1}{2\text{UAP}} &\ghost{3\text{UMA}} &\qw &\qw &\qw \\
        \lstick{\text{from} \ Q_3} &\qw &\qw &\qw &\ghost{2\text{AND}} &\ctrl{1} &\ghost{2\text{UAP}} &\qw &\qw &\qw &\qw & \rstick{\text{in} \ Q_3}\\
        \lstick{\ket{0}} &\qw &\qw &\qw &\qw &\targ &\qw &\qw &\qw &\qw &\qw & \rstick{\text{in} \ Q_4}\\
    }}
    \caption{An example for the refining phase of counting circuit in qutrit systems.}
    \label{fig:refining_example}
\end{figure*}
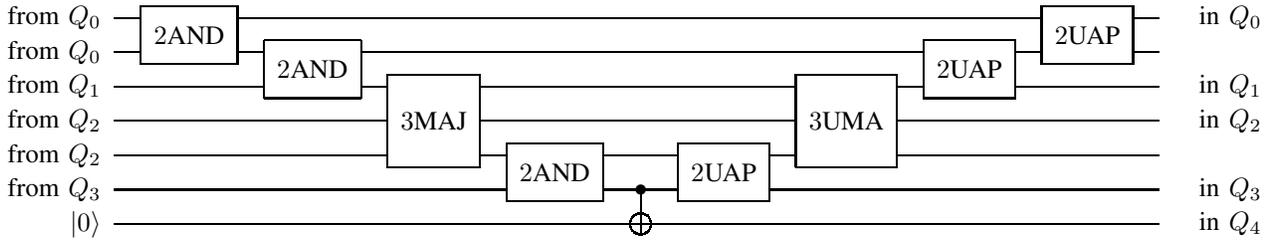

If $|Q_i|=3$ after applying the $2$AND gate to the qutrit in $Q_{i-1}$, we can utilize the $3$MAJ (Majority) gate and $3$UMA (Un-Majority and Add) gate \cite{cuccaro2004new}, illustrated in Fig.~\ref{fig:3_maj_uma_gate}, to achieve a similar functionality. $3$MAJ and $3$UMA have circuit implementations in qubit systems \cite{cuccaro2004new}. These circuits can be directly extended to qutrit systems and are therefore omitted. An example of the refining phase is illustrated in Fig.~\ref{fig:refining_example}.

\begin{theorem}
\label{the:qutrit_hamming}
    There is a quantum circuit of depth $O(\log n)$ in qutrit systems using one clean ancillary qutrit such that, with input $\ket{\boldsymbol{x},0}$ it computes Hamming weight $\ket{|\boldsymbol{x}|,\text{garbage}}$, where the most significant bit of $|\boldsymbol{x}|$ is stored in the ancillary qutrit.
\end{theorem}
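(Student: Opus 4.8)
The plan is to assemble the circuit from the two phases already described---the almost counting phase of Algorithm~\ref{alg:almost_counting} and the subsequent refining phase---and bound the depth of each by $O(\log n)$. Correctness of the overall construction follows by tracking the invariant $|\boldsymbol{x}| = \sum_{i=0}^t 2^i \sum_{k\in Q_i}[y_k=1]$. Lemma~\ref{lem:almost_counting} already guarantees that the almost counting circuit preserves this invariant while driving every bucket to $|Q_i|\le 2$, leaving some prefix $Q_0,\dots,Q_{r-1}$ with one or two qutrits and all higher buckets empty. The refining phase then only has to convert this carry-save encoding into a genuine binary representation, one qutrit per bit.

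First I would analyze the almost counting phase. The key observation is that the qutrit full adder (Fig.~\ref{fig:qutrit_full_adder}) is a constant-depth $3$-to-$2$ compressor: feeding three qutrits of $Q_i$ into it keeps one qutrit in $Q_i$ (the parity $a\oplus b\oplus c$) and promotes one to $Q_{i+1}$ (the majority). Within a single pass of the outer \texttt{while} loop, the full adders applied at a given level act on disjoint triples, and carries are injected into the next level only after the whole pass completes; hence all full adders of one iteration act on pairwise-disjoint qutrits and can be scheduled in parallel at constant depth. It then remains to bound the number of outer iterations. I would do this with a Wallace-tree-style height argument: writing $h_i$ for $|Q_i|$, one pass reduces a bucket of height $h_i$ to a residual in $\{1,2\}$ while emitting roughly $h_i/2$ carries to the level above. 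Tracking this recurrence shows the mass concentrates as a single heavy column that halves and advances one level per pass, so after $O(\log n)$ passes every bucket satisfies $|Q_i|\le 2$ and the loop halts. This gives depth $O(\log n)$ for the almost counting phase, and this height/parallelization argument is where I expect the real work to lie.

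Next I would treat the refining phase. Here the buckets $Q_0,\dots,Q_{r-1}$ with $r\le t+1=\lceil\log(n+1)\rceil$ form a carry-save word that is collapsed to binary exactly as in a ripple-carry adder \cite{cuccaro2004new}: the $2$AND/$2$UAP pair extracts the sum and the carry out of a weight-two bucket, and whenever an incoming carry raises a bucket to three qutrits the $3$MAJ/$3$UMA pair (Fig.~\ref{fig:3_maj_uma_gate}) performs the addition, with Fig.~\ref{fig:refining_example} as the template. Since each column gate is constant depth and carries ripple through only $r=O(\log n)$ columns, the refining phase also has depth $O(\log n)$. Correctness is immediate from the gate semantics, which reproduce the standard majority/sum relations; the topmost carry occupies the most significant position $t$, which was empty after almost counting, so it must be written onto a single clean ancillary qutrit---accounting for exactly one ancilla.

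Finally I would combine the two phases: the total depth is $O(\log n)+O(\log n)=O(\log n)$, the output qutrits carry the binary digits of $|\boldsymbol{x}|$ with its most significant bit on the ancilla, and the remaining input qutrits hold garbage, matching the statement of Theorem~\ref{the:qutrit_hamming}.
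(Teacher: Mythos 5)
Your proposal is correct and follows essentially the same route as the paper's proof: the identical two-phase construction (Algorithm~\ref{alg:almost_counting}'s almost-counting layers, then the ripple-carry-style refining with $2$AND/$2$UAP and $3$MAJ/$3$UMA gates, with the top carry written onto the single clean ancillary qutrit), the same observation that all full adders within one outer iteration act on disjoint triples and run in constant depth, and the same $O(\log n)$ bounds for both phases. The only cosmetic difference is the bookkeeping for the number of outer iterations --- you track a single heavy column that halves and advances one level per pass, whereas the paper argues that each layer shrinks $\sum_i |Q_i|$ by roughly a third apart from at most $2\lceil \log(n+1) \rceil$ untouched residuals --- but both are valid accountings of the same geometric decrease.
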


\begin{proof}
The circuit consists of two phases: the almost counting phase and the refining phase, as described earlier. The correctness of the circuit follows directly from the discussion provided above. Note that in Algorithm~\ref{alg:almost_counting}, the gates generated in each iteration of the while loop (line 3) can be executed in parallel. For each layer of the qutrit full adder, there are no more than $2\lceil \log (n+1) \rceil$ qutrits that remain unchanged. Thus each layer of the qutrit full adder reduces the $\sum_i |Q_i|$ roughly by 1/3, resulting in the circuit depth generated by Algorithm~\ref{alg:almost_counting} is $O(\log n)$. The second phase also has a depth of $O(\log n)$ since there are $O(\log n)$ qutrits in all the sets $Q_i$ after the almost counting phase. This completes the proof.
\end{proof}

\subsection{Implementing Boolean functions in qutrit systems} 
This subsection shows that any $n$-bit Boolean function can be implemented by a quantum circuit of depth $O(n^2)$ using $2^n-n-1$ borrowed ancillary qutrit in qutrit systems. Our design is similar to the Boolean function implementation in Section \ref{sec:qubit}. The problem is that the borrowed ancillary qutrit we used is no longer suitable for operations over field $\mathbb{F}_2$. Luckily, we can still write any Boolean functions into polynomials in field $\mathbb{F}_3$.

    \begin{figure*}[h]
    \centerline{\Qcircuit @C=1em @R=0.4em @!R {
        \lstick{\ket{x_1}} & \qw & \qw & \push{\circled{1}} \qwx[1] \qw & \push{\circled{1}} \qwx[2] \qw & \qw & \qw & \qw & \qw & \rstick{\ket{x_1}}\\
        \lstick{\ket{x_2}} & \qw & \qw & \push{\circled{1}} \qwx[2] \qw & \qw & \push{\circled{1}} \qwx[1] \qw & \qw & \qw & \qw & \rstick{\ket{x_2}}\\
        \lstick{\ket{x_3}}  & \push{\circled{1}} \qwx[1] \qw & \push{\circled{1}} \qwx[1] \qw & \qw & \push{\circled{1}} \qwx[2] \qw & \push{\circled{1}} \qwx[3] \qw & \push{\circled{1}} \qwx[1] \qw & \push{\circled{1}} \qwx[1] \qw & \qw & \rstick{\ket{x_3}}\\
        \lstick{\ket{a_1}}  & \push{\circled{1}} \qwx[3] \qw & \push{\circled{2}} \qwx[3] \qw & \gate{X_{+1}} & \qw & \qw & \push{\circled{1}} \qwx[3] \qw & \push{\circled{2}} \qwx[3] \qw & \qw & \rstick{\ket{a_1 + x_1 x_2}}\\
        \lstick{\ket{a_2}}  & \qw & \qw & \qw & \gate{X_{+1}} & \qw & \qw & \qw & \qw & \rstick{\ket{a_2 + x_1 x_3}}\\
        \lstick{\ket{a_3}}  & \qw & \qw & \qw & \qw & \gate{X_{+1}} & \qw & \qw & \qw & \rstick{\ket{a_3 + x_2 x_3}}\\
        \lstick{\ket{a_4}}  & \gate{X_{-1}} & \gate{X_{+1}} & \qw & \qw & \qw & \gate{X_{+1}} & \gate{X_{-1}} & \qw & \rstick{\ket{a_4 + x_1 x_2 x_3}}% \gategroup{1}{3}{4}{3}{.7em}{--} \gategroup{1}{4}{6}{5}{1em}{.}
        }}
    \caption{The quantum circuit of $\mathbf{G}_3$ in qutrit.}
    \label{fig:qutrit_G_3}
    \end{figure*}
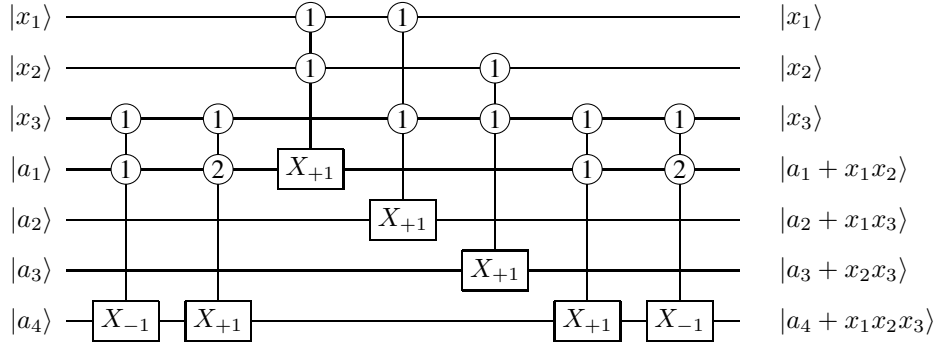

    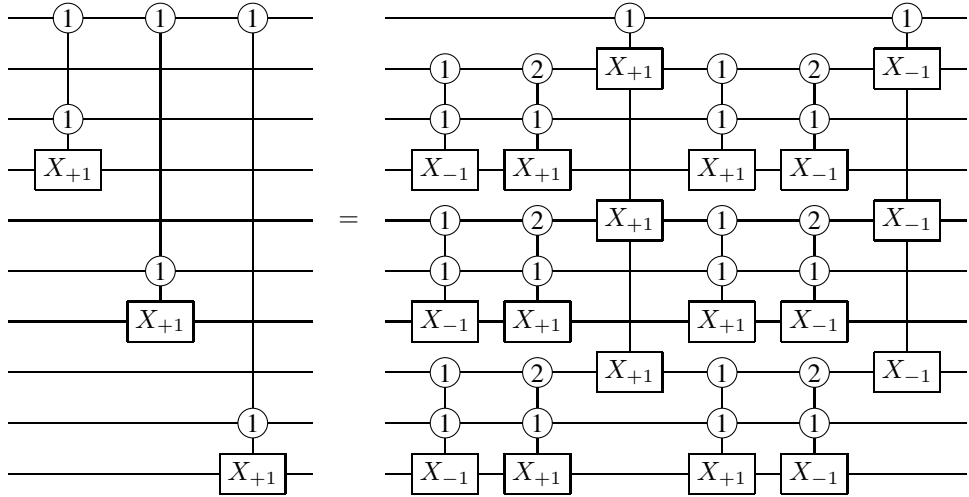
\begin{figure*}[h]
    \centerline{
        \Qcircuit @C=1em @R=0.4em @!R {
        & \push{\circled{1}} \qwx[2] \qw & \push{\circled{1}} \qwx[5] \qw & \push{\circled{1}} \qwx[8] \qw & \qw &&& \qw & \qw & \push{\circled{1}} \qwx[1] \qw & \qw & \qw & \push{\circled{1}} \qwx[1] \qw & \qw \\
        & \qw & \qw & \qw & \qw &&& \push{\circled{1}} \qwx[1] \qw & \push{\circled{2}} \qwx[1] \qw & \gate{X_{+1}} \qwx[3] & \push{\circled{1}} \qwx[1] \qw & \push{\circled{2}} \qwx[1] \qw & \gate{X_{-1}} \qwx[3] & \qw \\
        & \push{\circled{1}} \qwx[1] \qw & \qw & \qw & \qw &&& \push{\circled{1}} \qwx[1] \qw & \push{\circled{1}} \qwx[1] \qw & \qw & \push{\circled{1}} \qwx[1] \qw & \push{\circled{1}} \qwx[1] \qw & \qw & \qw \\
        & \gate{X_{+1}} & \qw & \qw & \qw &&& \gate{X_{-1}} & \gate{X_{+1}} & \qw & \gate{X_{+1}} & \gate{X_{-1}} & \qw & \qw \\
        & \qw & \qw & \qw & \qw &\push{=}&& \push{\circled{1}} \qwx[1] \qw & \push{\circled{2}} \qwx[1] \qw & \gate{X_{+1}} \qwx[3] & \push{\circled{1}} \qwx[1] \qw & \push{\circled{2}} \qwx[1] \qw & \gate{X_{-1}} \qwx[3] & \qw \\
        & \qw & \push{\circled{1}} \qwx[1] \qw & \qw & \qw &&& \push{\circled{1}} \qwx[1] \qw & \push{\circled{1}} \qwx[1] \qw & \qw & \push{\circled{1}} \qwx[1] \qw & \push{\circled{1}} \qwx[1] \qw & \qw & \qw \\
        & \qw & \gate{X_{+1}} & \qw & \qw &&& \gate{X_{-1}} & \gate{X_{+1}} & \qw & \gate{X_{+1}} & \gate{X_{-1}} & \qw & \qw \\
        & \qw & \qw & \qw & \qw &&& \push{\circled{1}} \qwx[1] \qw & \push{\circled{2}} \qwx[1] \qw & \gate{X_{+1}} & \push{\circled{1}} \qwx[1] \qw & \push{\circled{2}} \qwx[1] \qw & \gate{X_{-1}} & \qw \\
        & \qw & \qw & \push{\circled{1}} \qwx[1] \qw & \qw &&& \push{\circled{1}} \qwx[1] \qw & \push{\circled{1}} \qwx[1] \qw & \qw & \push{\circled{1}} \qwx[1] \qw & \push{\circled{1}} \qwx[1] \qw & \qw & \qw \\
        & \qw & \qw & \gate{X_{+1}} & \qw &&& \gate{X_{-1}} & \gate{X_{+1}} & \qw & \gate{X_{+1}} & \gate{X_{-1}} & \qw & \qw
        }}
    \caption{The quantum circuit of three shared-control qutrit gates.}
    \label{fig:qutrit_share}
    \end{figure*}

    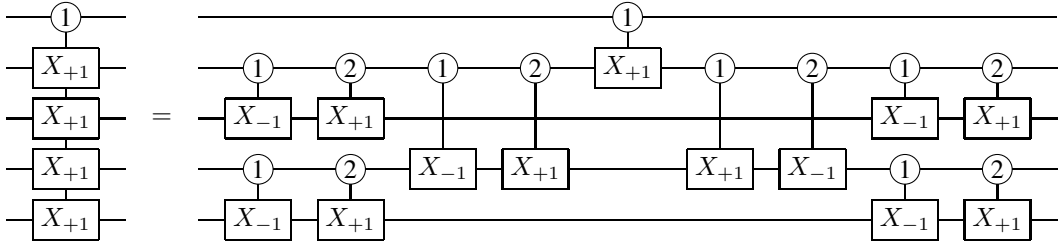
\begin{figure*}[h]
    \centerline{
        \Qcircuit @C=1em @R=0.4em @!R {
        & \push{\circled{1}} \qwx[1] \qw & \qw &&& \qw & \qw & \qw & \qw & \push{\circled{1}} \qwx[1] \qw & \qw & \qw & \qw & \qw & \qw \\
        & \gate{X_{+1}} \qwx[1] & \qw &&& \push{\circled{1}} \qwx[1] \qw & \push{\circled{2}} \qwx[1] \qw & \push{\circled{1}} \qwx[2] \qw & \push{\circled{2}} \qwx[2] \qw & \gate{X_{+1}} & \push{\circled{1}} \qwx[2] \qw & \push{\circled{2}} \qwx[2] \qw & \push{\circled{1}} \qwx[1] \qw & \push{\circled{2}} \qwx[1] \qw & \qw \\
        & \gate{X_{+1}} \qwx[1] & \qw &\push{=}&& \gate{X_{-1}} & \gate{X_{+1}} & \qw & \qw & \qw & \qw & \qw & \gate{X_{-1}} & \gate{X_{+1}} & \qw \\
        & \gate{X_{+1}} \qwx[1] & \qw &&& \push{\circled{1}} \qwx[1] \qw & \push{\circled{2}} \qwx[1] \qw & \gate{X_{-1}} & \gate{X_{+1}} & \qw & \gate{X_{+1}} & \gate{X_{-1}} & \push{\circled{1}} \qwx[1] \qw & \push{\circled{2}} \qwx[1] \qw & \qw \\
        & \gate{X_{+1}} & \qw &&& \gate{X_{-1}} & \gate{X_{+1}} & \qw & \qw & \qw & \qw & \qw & \gate{X_{-1}} & \gate{X_{+1}} & \qw
        }}
    \caption{The quantum circuit of quantum fan-out in qutrit systems.}
    \label{fig:qutrit_fanout}
    \end{figure*}

\begin{lemma}
\label{lem:F3poly}
    For any function $f:\{0,1\}^n\to\{0,1\}$, there are coefficients $c_S \in \{0,1,2\}, \ S\subseteq \{1,2,\dots,n\}$ such that $f(x_1,\dots,x_n)=\sum_{S\subseteq \{1,2,\dots,n\}}c_S\prod_{i\in S}x_i$, where the operations are over $\mathbb{F}_3$.
\end{lemma}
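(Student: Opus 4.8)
The plan is to prove the slightly more general statement that \emph{every} function $g:\{0,1\}^n\to\mathbb{F}_3$ can be written as a multilinear polynomial $\sum_{S\subseteq\{1,\dots,n\}}c_S\prod_{i\in S}x_i$ with $c_S\in\mathbb{F}_3$. Since $\mathbb{F}_3=\{0,1,2\}$ as a set, the restriction $c_S\in\{0,1,2\}$ is then automatic, and the $\{0,1\}$-valued case of Lemma~\ref{lem:F3poly} is the special instance $g=f$. The key enabling fact is that on the Boolean cube we may always reduce any power $x_i^{k}$ (with $k\ge 1$) to $x_i$, because $0^2=0$ and $1^2=1$; hence it suffices to produce \emph{any} polynomial representation over $\mathbb{F}_3$ and collapse it to a multilinear one.

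First I would set up a Lagrange-style interpolation over the Boolean cube. For each point $a=(a_1,\dots,a_n)\in\{0,1\}^n$, define
\[
  \delta_a(x)=\prod_{i:a_i=1}x_i\ \prod_{i:a_i=0}(1-x_i),
\]
a polynomial over $\mathbb{F}_3$. Evaluating on the cube, a factor $x_i$ equals $1$ exactly when $x_i=1$ and a factor $(1-x_i)$ equals $1$ exactly when $x_i=0$, so for every $b\in\{0,1\}^n$ we have $\delta_a(b)=1$ if $b=a$ and $\delta_a(b)=0$ otherwise. Consequently $f(x)=\sum_{a\in\{0,1\}^n}f(a)\,\delta_a(x)$ agrees with $f$ at every Boolean point.

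The second step is to read off the multilinear form and its coefficients. Expanding each $\delta_a$ by multiplying out the factors $(1-x_i)$ yields a sum of monomials $\prod_{i\in S}x_i$; because the index sets $\{i:a_i=1\}$ and $\{i:a_i=0\}$ are disjoint, no variable ever appears squared, so every monomial produced is already multilinear. Collecting equal monomials across all $a$ with $f(a)=1$ gives coefficients $c_S$ that are sums of terms $\pm 1$ computed in $\mathbb{F}_3$, and these necessarily lie in $\mathbb{F}_3=\{0,1,2\}$. This establishes the claimed identity as functions on $\{0,1\}^n$.

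I do not expect a genuine obstacle here; the only points needing care are (i) verifying that the product form of $\delta_a$ is truly multilinear rather than introducing higher powers, which follows from the disjointness of the two index sets, and (ii) recognizing that ``coefficients in $\{0,1,2\}$'' is not an additional constraint but merely a restatement that the coefficients live in $\mathbb{F}_3$. If uniqueness of the representation is also desired, it follows from a dimension count: the space of functions $\{0,1\}^n\to\mathbb{F}_3$ has dimension $2^n$ over $\mathbb{F}_3$, the $2^n$ multilinear monomials span it by the interpolation above, and hence form a basis.
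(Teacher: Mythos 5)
Your proof is correct, but it takes a different route from the paper's. The paper argues by recursion on the variables: it writes $f(x_1,\dots,x_n) = (1+2x_n)\,g_0(x_1,\dots,x_{n-1}) + x_n\,g_1(x_1,\dots,x_{n-1})$, where $g_0$ and $g_1$ are the restrictions of $f$ at $x_n=0$ and $x_n=1$, and then repeats this expansion on each remaining variable. Since $1+2x_n = 1-x_n$ over $\mathbb{F}_3$, this is a Shannon-type expansion whose full unrolling is exactly your Lagrange interpolation $f=\sum_a f(a)\,\delta_a$; the two proofs thus rest on the same identity, but you build the polynomial globally in one shot whereas the paper peels off one variable at a time. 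Your version buys somewhat more: the coefficients are exhibited explicitly as $\mathbb{F}_3$-sums of $\pm 1$, the argument applies verbatim to arbitrary $\mathbb{F}_3$-valued functions on the cube, and your dimension count gives uniqueness of the multilinear representation --- none of which the paper states (or needs). The paper's recursion is terser and requires no interpolation machinery, which is all the circuit construction in Lemma~\ref{lem:boolean_qutrit} demands. One cosmetic remark: in your collection step you sum ``across all $a$ with $f(a)=1$,'' which is right for Boolean-valued $f$, but since you advertise the general $\mathbb{F}_3$-valued statement you should weight each $\delta_a$ by $f(a)$; your displayed formula already does this, so it is only a wording slip.
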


\begin{proof}
    Denote $g_0(x_1,\dots,x_{n-1}) = f(x_1,\dots,x_{n-1},0)$ and $g_1(x_1,\dots,x_{n-1}) = f(x_1,\dots,x_{n-1},1)$. We have: $f(x_1,\dots,x_n) = (1+2x_n)g_0(x_1,\dots,x_{n-1}) + x_n g_1(x_1,\dots,x_{n-1}).$
    
    By handling each input bit as $x_n$ above, we can obtain the form $f(x_1,\dots,x_n)=\sum_{S\subseteq \{1,2,\dots,n\}}c_S\prod_{i\in S}x_i$.
\end{proof}

\begin{lemma}
    \label{lem:boolean_qutrit}
    A Boolean function $f:\{0,1\}^n \to \{0,1\}$ can be implemented using a quantum circuit of depth $O(n^2)$ with $2^n - n - 1$ borrowed ancillary qutrit in qutrit systems.
\end{lemma}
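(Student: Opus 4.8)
The plan is to lift the qubit construction of Lemma~\ref{lem:boolean_function} to the qutrit setting almost verbatim, substituting the XOR-of-products (ESOP) representation with the $\mathbb{F}_3$ polynomial representation guaranteed by Lemma~\ref{lem:F3poly}. Write $f(\boldsymbol{x}) = \sum_{S\subseteq\{1,\dots,n\}} c_S \prod_{i\in S} x_i$ with $c_S \in \{0,1,2\}$ and all arithmetic over $\mathbb{F}_3$. Each monomial $\prod_{i\in S} x_i$ is $0/1$-valued, and because the total sum equals $f(\boldsymbol{x})\in\{0,1\}$, accumulating $c_S$ into the target qutrit whenever a monomial fires (by applying $X_{+1}^{c_S}$, recalling that $X_{+1}^2 = X_{-1}$) never drives $\ket{t}$ into $\ket{2}$; the target therefore remains a genuine $0/1$ qubit, as required.

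First I would partition the monomials into the constant term ($S=\emptyset$), the $n$ linear terms ($|S|=1$), and the $2^n-n-1$ higher-order terms ($|S|\geq 2$); the last count is exactly the promised number of borrowed ancillary qutrits, one per higher-order monomial. For each linear term I apply $X_{+1}^{c_{\{i\}}}$ to $\ket{t}$ controlled on $x_i=\ket{1}$, and I add $X_{+1}^{c_\emptyset}$ on $\ket{t}$ for the constant. For the higher-order terms I construct a qutrit gate $\mathbf{G}_n$ (generalizing $\mathbf{G}_3$ of Fig.~\ref{fig:qutrit_G_3}) that performs $\ket{a_S}\to\ket{a_S + \prod_{i\in S}x_i \bmod 3}$ simultaneously for all $|S|\geq 2$, each monomial realized by a multi-controlled $X_{+1}$ gate.

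The key step is extracting \emph{only} the monomial value onto $\ket{t}$ despite the borrowed ancilla $a_S$ holding an arbitrary value in $\{0,1,2\}$. I would sandwich $\mathbf{G}_n$ between two controlled additions onto $\ket{t}$: before $\mathbf{G}_n$, subtract $c_S a_S$ from $\ket{t}$; after $\mathbf{G}_n$, add $c_S a_S'$ where $a_S' = a_S + \prod_{i\in S}x_i$. By linearity over $\mathbb{F}_3$ the net effect on $\ket{t}$ is $c_S \cdot \prod_{i\in S} x_i$, exactly the desired contribution, and each such controlled addition is a constant-size qutrit gate (conditioning separately on $a_S=\ket{1}$, applying $X_{+1}^{c_S}$, and on $a_S=\ket{2}$, applying $X_{-1}^{c_S}$). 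Applying $\mathbf{G}_n^{\dagger}$ at the end restores every ancilla to its original state. Since all additions onto $\ket{t}$ commute and involve distinct controls, they collapse into a fan-in of depth $O(n)$, just as the CNOTs onto $\ket{t}$ collapse into a parity gate in the qubit proof.

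Finally, for the depth of $\mathbf{G}_n$ I would recurse from $\mathbf{G}_k$ to $\mathbf{G}_{k+1}$ exactly as in Lemma~\ref{lem:boolean_function}: the monomials on $k+1$ variables are those on the first $k$ variables (produced by $\mathbf{G}_k$) together with their products with $x_{k+1}$, and the latter $2^k-1$ multi-controlled $X_{+1}$ gates all share the common control $x_{k+1}$. I would parallelize each batch using the qutrit shared-control identity of Fig.~\ref{fig:qutrit_share}, implemented with a constant number of qutrit fan-out gates (Fig.~\ref{fig:qutrit_fanout}), each of depth $O(k)$ by the qutrit analog of Lemma~\ref{lem:fanout}. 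Each recursion level then costs $O(k)$ depth, giving $\mathbf{G}_n$ depth $O(n^2)$ and hence the whole circuit depth $O(n^2)$. The main obstacle I anticipate is verifying that the borrowed-ancilla sandwich and the shared-control decomposition are \emph{exactly} correct over $\mathbb{F}_3$ — in particular that the qutrit fan-out and the $X_{\pm 1}$-controlled gates compose to the intended mod-$3$ arithmetic even when the shared control and the borrowed ancillas carry arbitrary values in $\{0,1,2\}$ rather than just $\{0,1\}$.
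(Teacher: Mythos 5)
Your proposal is correct and follows essentially the same route as the paper's proof: the $\mathbb{F}_3$ expansion of Lemma~\ref{lem:F3poly}, the subtract-before/add-after borrowed-ancilla sandwich around $\mathbf{G}_n$ (your folding of the coefficient $c_S$ into the sandwich merely replaces the paper's two-pass handling of $c_S=2$ terms), the tree-structured fan-in of all mod-3 additions onto the target, and the recursive construction of $\mathbf{G}_n$ parallelized by shared-control qutrit gates and qutrit fan-outs, giving depth $O(n^2)$ with $2^n-n-1$ borrowed ancillary qutrits. One aside in your argument is false but harmless: intermediate (and even final) values of the target \emph{do} reach $\ket{2}$ --- e.g., XOR is $x_1+x_2+x_1x_2$ over $\mathbb{F}_3$, whose partial sum at $x_1=x_2=1$ is $2$ --- which is precisely why the paper defines the action as $\ket{t}\to\ket{t+f(\boldsymbol{x})\bmod 3}$ rather than requiring the target to remain $0/1$-valued throughout.
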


\begin{proof}
    The quantum circuit has the same structure as the circuit constructed in Lemma~\ref{lem:boolean_function}. According to Lemma~\ref{lem:F3poly}, for the function $f(x_1,\dots,x_n)=\sum_{S\subseteq {1,2,\dots,n}}c_S\prod_{i\in S}x_i$, we need to add every term in this function to the target qutrit to transform $\ket{t}$ to $\ket{t+f(\boldsymbol{x})}$.

    For the linear term $c_S\prod_{i\in S}x_i$ where $|S| = 1$, we only need to add $x_i$ or $2x_i$ to the target qutrit according to the coefficient. The transformation $\ket{x,t} \to \ket{x,t+x}$ can be implemented by two qutrit gates: $\ket{1}$-$X_{+1}$ and $\ket{2}$-$X_{-1}$. Applying this transformation twice results in $\ket{x,t} \to \ket{x,t+2x}$. Thus, we can add all linear terms to the target qutrit $\ket{t}$.

    For each higher-order term $T_S = c_S\prod_{i\in S}x_i$ where $|S| > 1$, we require a borrowed ancillary qutrit to add this term to the target qutrit $\ket{t}$. Assuming we can add term $T_S$ to a borrowed ancillary qutrit $\ket{a}$, then $T_S$ could be added to the target qutrit $\ket{t}$ as follows:
    \begin{align*}
        \ket{a,t} &\to \ket{a,t+2a} \\
        &\to \ket{a+T_S,t+2a} \\
        &\to \ket{a+T_S,t+2a+a+T_S} \\
        &= \ket{a+T_S,t+T_S} \\
        &\to \ket{a,t+T_S}.
    \end{align*}
    Therefore, to add all higher-order terms to the target qutrit, we need $m=2^n-n-1$ borrowed ancillary qutrits $\ket{a_1,a_2,\dots,a_m}$. First, we add all the ancillary qutrits to the target qutrit twice:
    \begin{equation*}
        \ket{a_1,a_2,\dots,a_m}\ket{t} \to \ket{a_1,a_2,\dots,a_m}\ket{t + \sum_{i=1}^m 2a_i}.
    \end{equation*}
    Next, we add all the higher-order terms to the ancillary qutrit and then add all the ancillary qutrits to the target qutrit, transforming the target qutrit to $\ket{t+\sum_{|S|>1} T_S}$. Finally, we recover all the ancillary qutrits, successfully adding all higher-order terms to the target qutrit. In practice, if $c_S = 0$, $T_S$ does not need to be added to the target qutrit. If $c_S = 2$, $2\prod_{i\in S}x_i$ is difficult to compute, so it is replaced by adding $\prod_{i\in S}x_i$ twice. Therefore, we assume $c_S = 1$ for all higher-order terms for convenience.

    For the constant term $c_{\emptyset}$, we apply the $X_{+c_{\emptyset}}$ gate to the target qutrit. Next, we introduce two key techniques to reduce the circuit depth.

    We demonstrate the transformation: $$\ket{a_1,\dots,a_m}\ket{t} \to \ket{a_1,\dots,a_m}\ket{t+\sum_i a_i},$$ can be implemented by an $O(\log m) = O(n)$ depth circuit. Notably, we can add $\ket{a_{2k}}$ to $\ket{a_{2k+1}}$ for all $k = 0,1,\dots$ using a constant layer of gates. This results in the need to add the value of $\lceil m/2 \rceil$ qutrits to the target qutrit only. Therefore, by employing a constant-depth circuit, we can halve the size of the problem, reducing the circuit depth to $O(\log m) = O(n)$.

    We illustrate how to add all higher-order terms $T_S = \prod_{i\in S}x_i, |S| > 1$ to their corresponding ancillary qutrits using a $O(\log^2 m) = O(n^2)$ depth circuit. We set all coefficients $c_S$ to $1$ for convenience, as terms with $c_S = 2$ can be computed by adding them twice to the target qutrit. Let $\mathbf{G}_n$ denote the circuit that adds all $n$-bit higher-order terms to the ancillary qutrits, similar to the notation in Lemma~\ref{lem:boolean_function}. The quantum circuit $\mathbf{G}_3$ is illustrated in Fig.~\ref{fig:qutrit_G_3}, which is similar to the circuit in Fig.~\ref{fig:G_3}. Let's consider how $\mathbf{G}_3$ adds $x_1x_2x_3$ to the ancillary qutrit:
    \begin{align*}
        \ket{x_3,a_1,a_4} &\to \ket{x_3,a_1,a_4 + 2a_1x_3} \\
        &\to \ket{x_3,a_1 + x_1x_2,a_4 + 2a_1x_3} \\
        &\to \ket{x_3,a_1 + x_1x_2,a_4 + 2a_1x_3 + (a_1 + x_1x_2)x_3} \\
        & = \ket{x_3,a_1 + x_1x_2,a_4 + x_1x_2x_3}.
    \end{align*}
    
    The above transformation illustrates how we can add $x_1x_2x_3$ to the ancillary qutrit if we can compute $x_1x_2$. The key idea is that we can utilize the value of $x_1x_2$ that is added to $\ket{a_1}$ to compute the value of $x_1x_2x_3$. Similarly, we can add an arbitrary $x_iP$ to an ancillary qutrit if we can compute $P$. Thus, we can extend the circuit of $G_4$ in Fig.~\ref{fig:G_4} to the qutrit systems to construct the circuit of $\mathbf{G}_4$ based on $\mathbf{G}_3$. Similarly, we can construct $\mathbf{G}_k$ based on $\mathbf{G}_{k-1}$ for arbitrary $k$. Note that the key technique of constructing $G_k$ from $G_{k-1}$ using an $O(k)$ depth circuit involves shared-control Toffoli gates, as shown in Fig.~\ref{fig:Toffoli_fanout}. Likewise, the key technique in the qutrit systems is the circuit of shared-control qutrit gates, an example of which is depicted in Fig.~\ref{fig:qutrit_share}.
    
    The key idea is to utilize borrowed ancillary qutrits to store the information of the shared-control qutrit. This necessitates a quantum fan-out gate in the qutrit systems to distribute the value of the shared-control qutrit to all the ancillary qutrits. The $m$-qutrit fan-out gate can be implemented by an $O(\log m)$ depth circuit, with a proof similar to that of Lemma~\ref{lem:fanout}, which we omit here. An illustration of a $5$-qutrit fan-out gate is provided in Fig.~\ref{fig:qutrit_fanout}. In Fig.~\ref{fig:qutrit_fanout}, if the control qutrit is in state $\ket{1}$, the target qutrits are applied $X_{+1}$ gate. Reversing the circuit yields the circuit for applying $X_{-1}$ to the target qutrit. In Fig.~\ref{fig:qutrit_share}, we only depict the implementation of shared-control $\ket{11}$-$X_{+1}$ gates. The circuits for $\ket{12}$-$X_{+1}$ gate, $\ket{11}$-$X_{-1}$ gate, and $\ket{12}$-$X_{-1}$ gate are similar and omitted. Therefore, $\mathbf{G}_k$ can be constructed from $\mathbf{G}_{k-1}$ using an $O(k)$ depth circuit, similar to Lemma~\ref{lem:boolean_function}. Consequently, $\mathbf{G}_n$ can be implemented by an $O(n^2)$ depth circuit.
    
    With all the techniques mentioned above, computing $f(\boldsymbol{x})$ involves four steps:
    \begin{itemize}
        \item Adding all linear terms $c_S\prod_{i\in S}x_i$ where $|S| = 1$ to the target qutrit using an $O(n)$ depth circuit.
        \item Adding higher-order terms $\prod_{i\in S}x_i$ where $|S| > 1$ and $c_S > 1$ to the target qutrit using an $O(n^2)$ depth circuit with $2^n-n-1$ borrowed ancillary qutrit.
        \item Adding higher-order terms $\prod_{i\in S}x_i$ where $|S| > 1$ and $c_S = 2$ to the target qutrit using an $O(n^2)$ depth circuit with $2^n-n-1$ borrowed ancillary qutrit.
        \item Adding the constant term $c_{\emptyset}$ to the target qutrit using the $X_{+c_{\emptyset}}$ gate.
    \end{itemize}
    
    As a result, the depth of the entire circuit is $O(n^2)$. To construct the circuit of $\mathbf{G}_n$, $2^n-n-1$ borrowed ancillary qutrits are required.
    
    \end{proof}

\begin{table*}[h]
    \centering
    \caption{This table displays the circuit depth required to compute the Hamming weight, majority function, and arbitrary symmetric functions with varying numbers of input qubits. Additionally, it indicates the number of ancillary qubits needed to compute symmetric functions.}
    \label{tab:qubit}
    \begin{tabular}{ccccccccc}
        \hline \hline
        Number of input qubits & 31 & 63 & 127 & 255 & 511 & 1023 & 2047 & 4095 \\
        \hline
        Depth of circuit computing Hamming weight & 138 & 190 & 250 & 318 & 394 & 478 & 570 & 670 \\
        Depth of circuit computing majority function & 277 & 381 & 501 & 637 & 789 & 957 & 1141 & 1341 \\
        Depth of circuit computing symmetric functions & $\leq$1038 & $\leq$1682 & $\leq$2438 & $\leq$3306 & $\leq$4286 & $\leq$5378 & $\leq$6582 & $\leq$7898 \\
        Ancilla count of circuit computing symmetric functions & 5 & 6 & 7 & 8 & 9 & 10 & 11 & 12 \\ \hline \hline
    \end{tabular}
\end{table*}

\begin{table*}[h]
    \centering
    \caption{This table shows the number of layers of qutrit full adders required to compute the Hamming weight and majority function with varying numbers of input qubits without ancillary qutrit.}
    \label{tab:qutrit}
    \begin{tabular}{ccccccccc}
        \hline \hline
        Number of input qutrits & 31 & 63 & 127 & 255 & 511 & 1023 & 2047 & 4095 \\
        \hline
        Number of layers of qutrit full-adders to compute Hamming weight & 7 & 9 & 10 & 12 & 14 & 16 & 18 & 20 \\
        Number of layers of qutrit full-adders to compute majority function & 14 & 18 & 20 & 24 & 28 & 32 & 36 & 40 \\ \hline \hline
    \end{tabular}
\end{table*}

With Lemma~\ref{lem:boolean_qutrit}, we can now state Theorem~\ref{the:qutrit}.

\begin{theorem}
    \label{the:qutrit}
    A symmetric function $f$ can be implemented using a quantum circuit of depth $O(\log^2 n)$ with one clean ancillary qutrit in qutrit systems.
\end{theorem}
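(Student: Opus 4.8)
The plan is to mirror the qubit construction of Theorem~\ref{the:qubit}, replacing each ingredient by its qutrit counterpart. First I would invoke Theorem~\ref{the:qutrit_hamming} to compute the Hamming weight of $\boldsymbol{x}$ in place using the single clean ancillary qutrit, which holds the most significant bit $|\boldsymbol{x}|_1$ while the remaining $m-1$ bits (with $m=\lceil\log(n+1)\rceil$) are stored on $m-1$ of the input qutrits. This step costs only $O(\log n)$ depth and reduces the $n$-input symmetric function $f$ to a Boolean function $g$ of the $m$ bits of $|\boldsymbol{x}|$, exactly as in the qubit case. Crucially, the $n-(m-1)$ remaining input qutrits now carry garbage that is a deterministic function of $\boldsymbol{x}$, so they are available as \emph{borrowed} ancillary qutrits for the subsequent steps.

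Next I would classically precompute the required polynomial form. Rather than an ESOP over $\mathbb{F}_2$, here I would use the $\mathbb{F}_3$ decomposition of Lemma~\ref{lem:F3poly}, splitting $g$ on the most significant bit held by the clean ancilla: $g = g_0 + |\boldsymbol{x}|_1\, h$ over $\mathbb{F}_3$, where $g_0 = g|_{|\boldsymbol{x}|_1 = 0}$ and $h = 2g_0 + g_1$ with $g_1 = g|_{|\boldsymbol{x}|_1 = 1}$, both depending only on the lower $m-1$ bits. One checks that this identity returns $g_0$ when $|\boldsymbol{x}|_1 = 0$ and $3g_0+g_1 = g_1$ when $|\boldsymbol{x}|_1 = 1$, so the Boolean value $g\in\{0,1\}$ is recovered. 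This split into two functions of $m-1$ variables is the decisive move: evaluating each via Lemma~\ref{lem:boolean_qutrit} requires only $2^{m-1}-m$ borrowed ancillary qutrits.

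To assemble the circuit I would add $g_0$ to the target qutrit via Lemma~\ref{lem:boolean_qutrit}, using the garbage input qutrits as borrowed ancillas, then realise the conditional term $|\boldsymbol{x}|_1 h$ by the qutrit analogue of the Toffoli trick of Theorem~\ref{the:qubit}: pick a spare garbage qutrit $\ket{a}$, subtract $a$ from the target conditioned on $|\boldsymbol{x}|_1$, compute $h$ onto $\ket{a}$ by Lemma~\ref{lem:boolean_qutrit}, add $\ket{a}$ back onto the target conditioned on $|\boldsymbol{x}|_1$ (so the $-|\boldsymbol{x}|_1 a$ and $+|\boldsymbol{x}|_1 a$ contributions cancel and leave $+|\boldsymbol{x}|_1 h$), and finally uncompute $h$ from $\ket{a}$. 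Reversing the Hamming-weight circuit of Theorem~\ref{the:qutrit_hamming} then restores all input qutrits and returns the clean ancilla to $\ket{0}$, leaving the target in $\ket{t + f(\boldsymbol{x})}$. Since $g_0$ and $h$ are functions of $m-1 = O(\log n)$ variables, Lemma~\ref{lem:boolean_qutrit} contributes depth $O((m-1)^2)=O(\log^2 n)$, which dominates the $O(\log n)$ counting phase and yields the claimed $O(\log^2 n)$ bound with a single clean ancillary qutrit.

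The main obstacle is verifying the borrowed-ancilla budget, and this is precisely what forces the split on $|\boldsymbol{x}|_1$. After the counting phase only $n-(m-1)$ input qutrits are free, so I must confirm that $2^{m-1}-m$ (plus the one spare qutrit $\ket{a}$) does not exceed this count; this reduces to $2^{m-1}\le n$, which holds because $2^{m}$ is the least power of two at least $n+1$. Had I instead tried to evaluate the full $m$-variable function $g$ directly, Lemma~\ref{lem:boolean_qutrit} would demand $2^{m}-m-1 \approx 2n$ borrowed qutrits, overshooting the budget; the $\mathbb{F}_3$ decomposition into two $(m-1)$-variable pieces is what keeps the count below $n$. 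A secondary point needing care is that, because Lemma~\ref{lem:boolean_qutrit} accumulates values modulo $3$ rather than by XOR, the conditional combination must use the subtract-then-add structure above to cancel the garbage value $a$, in place of the single parity gate that suffices in the qubit setting.
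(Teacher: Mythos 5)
Your proposal is correct and follows essentially the same route as the paper's proof: compute the Hamming weight in place via Theorem~\ref{the:qutrit_hamming}, split $g$ on the most significant bit into two $(m-1)$-variable functions so that Lemma~\ref{lem:boolean_qutrit} fits the borrowed-ancilla budget, and realise the conditional term with the add-$2a$/compute/add-back cancellation over $\mathbb{F}_3$ before uncomputing. Your explicit decomposition $h = 2g_0 + g_1$ and the check $2^{m-1}\le n$ are just more careful statements of what the paper leaves implicit.
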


\begin{proof}
    Similar to the proof of Theorem~\ref{the:qubit}, we first compute the Hamming weight of the input $|\boldsymbol{x}|$ using Theorem~\ref{the:qutrit_hamming}. Then, we add the output $f(\boldsymbol{x})$ to the target qutrit in two parts, depending on whether the most significant bit of $|\boldsymbol{x}|$ is $0$ or $1$. This division of the computation of $f(\boldsymbol{x})$ into two parts is done to ensure we have enough borrowed ancillary qutrits, as required by Lemma~\ref{lem:boolean_qutrit}.

    Specifically, let's assume $f(\boldsymbol{x}) = g(|\boldsymbol{x}|) = g_0 + |\boldsymbol{x}|_1 g_1$, where $|\boldsymbol{x}|_1$ represents the most significant bit of $|\boldsymbol{x}|$, and $g_0$ and $g_1$ are functions of $|\boldsymbol{x}|_{2:\lceil \log(n+1) \rceil}$. After computing the Hamming weight $|\boldsymbol{x}|$ using Theorem~\ref{the:qutrit_hamming}, the transformation for computing $f(\boldsymbol{x})$ to the target qutrit $\ket{t}$ proceeds as follows, where $\ket{a}$ is an input qutrit serving as borrowed ancillary qutrit:
    \begin{align*}
        &\ket{|\boldsymbol{x}|_1,a,t} \\
        &\to \ket{|\boldsymbol{x}|_1,a,t+g_0}, \text{Lemma~\ref{lem:boolean_qutrit}} \\
        &\to \ket{|\boldsymbol{x}|_1,a,t+g_0 + 2a|\boldsymbol{x}|_1} \\
        &\to \ket{|\boldsymbol{x}|_1,a + g_1,t+g_0 + 2a|\boldsymbol{x}|_1}, \text{Lemma~\ref{lem:boolean_qutrit}} \\
        &\to \ket{|\boldsymbol{x}|_1,a + g_1,t+g_0 + 2a|\boldsymbol{x}|_1 + (a + g_1)|\boldsymbol{x}|_1} \\
        &= \ket{|\boldsymbol{x}|_1,a + g_1,t+g_0 + g_1|\boldsymbol{x}|_1} \\
        &= \ket{|\boldsymbol{x}|_1,a + g_1,t+f(\boldsymbol{x})} \\
        &\to \ket{|\boldsymbol{x}|_1,a,t+f(\boldsymbol{x})}, \text{Recover $\ket{a}$}.
    \end{align*}
    
    Constructing the quantum circuit based on the transformation above is straightforward and similar to Theorem~\ref{the:qubit}, thus it is omitted. As a result, the circuit depth is $O(\log^2 n)$ according to Theorem~\ref{the:qutrit_hamming} and Lemma~\ref{lem:boolean_qutrit}. One clean ancillary qutrit is required in Theorem~\ref{the:qutrit_hamming}.
\end{proof}

\section{Analysis and Evaluation}
\label{sec:analysis}

In this section, we provide an analysis of the depth of the quantum circuit that implements the Hamming weight and symmetric functions in qubit systems. We also evaluate the depth of the circuit that computes the majority function. It's important to note that all circuits are composed of CNOT and single-qubit gates. We omit a detailed analysis and restate the following Lemma and Theorem:

\begin{lemma}
    (Lemma~\ref{lem:fanout} restate) The quantum fan-out gate $F_n$ can be implemented using a quantum circuit of depth $2\lceil \log n \rceil + 1$ without ancillary qubit.
\end{lemma}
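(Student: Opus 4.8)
The plan is to make the construction already drawn in Fig.~\ref{fig:F_8} precise for general $n$ and to prove that its depth is exactly $2\lceil \log n\rceil + 1$. The circuit is mirror-symmetric about a single central CNOT, so I would split the argument into three pieces: an \emph{upsweep} phase, the central gate, and a \emph{downsweep} phase that is the upsweep replayed in reverse. The underlying object is a broadcast tree rooted at $b_1$: for strides $s = 1,2,4,\dots,2^{\lceil\log n\rceil-1}$, declare $b_j$ to be the parent of $b_{j+s}$ whenever $j \equiv 1 \pmod{2s}$ and $j+s \le n$ (one checks this reproduces exactly the pairings in the figure). Each non-root qubit then has a unique parent, and its \emph{parent-stride} equals the lowest set bit of $j-1$.

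First I would define the upsweep as $\lceil \log n\rceil$ layers, where layer $\ell$ has stride $s = 2^{\ell-1}$ and applies in parallel the CNOTs $b_j \to b_{j+s}$ over all parent--child pairs at that stride. Processing strides from smallest to largest is essential: a control used at stride $s$ has parent-stride at least $2s > s$, so its own target-firing has not yet occurred and it still carries its original value $B_j$. Consequently, after the upsweep every non-root qubit holds $B_i \oplus B_{\mathrm{par}(i)}$ while $b_1$ is untouched. The central layer applies CNOT $a \to b_1$, and the downsweep replays the same layers in reverse (largest stride first), i.e.\ the gates right of centre in Fig.~\ref{fig:F_8}.

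The correctness argument is the heart of the proof and is pure $\mathbb{F}_2$-linear bookkeeping. I would establish, by induction from the root outward, the invariant that just before a downsweep CNOT fires, its control already holds $B_{\mathrm{par}(i)}\oplus a$ while its target still holds the upsweep value $B_i \oplus B_{\mathrm{par}(i)}$; the CNOT then sends the target to $B_i \oplus a$, simultaneously cancelling the auxiliary term $B_{\mathrm{par}(i)}$ introduced during the upsweep and installing $a$. The base case is $b_1$, fixed to $B_1 \oplus a$ by the central gate. The inductive step works because a node's parent-stride strictly exceeds the strides at which it acts as a parent, so in the largest-stride-first downsweep each node is corrected (as a target) before it is ever used as a control; its value is therefore $B_i \oplus a$ exactly when its children need it. Hence every $b_i$ ends at $B_i \oplus a$ and $a$ is returned unchanged, which is precisely $F_n$.

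For the depth I would observe that within any single stride the CNOTs act on pairwise-disjoint qubit sets (each index is a control or a target at most once per layer, and never both), so each of the $\lceil \log n\rceil$ upsweep layers, the central gate, and each of the $\lceil \log n\rceil$ downsweep layers contributes depth $1$, summing to $2\lceil\log n\rceil + 1$. The main obstacle I anticipate is $n$ that is not a power of two: the broadcast tree is then unbalanced and some parent--child pairs at large strides are missing, so I must verify that both the inductive invariant and the per-layer disjointness claim survive this irregularity and that the stride-doubling schedule still terminates in exactly $\lceil\log n\rceil$ layers, rather than leaving stragglers that would force an extra layer and break the stated constant.
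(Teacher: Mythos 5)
Your proposal is correct and formalizes exactly the construction the paper relies on: the paper gives no written proof of this lemma (it cites the result as well known and only exhibits the $F_8$ circuit in Fig.~\ref{fig:F_8}), and your upsweep--central-CNOT--downsweep broadcast tree is precisely that circuit generalized to arbitrary $n$, with a sound $\mathbb{F}_2$ bookkeeping argument and a valid per-layer disjointness argument for the depth count. The non-power-of-two obstacle you flag does resolve: every node $2 \le i \le n$ has parent-stride equal to the lowest set bit of $i-1$, which is at most $2^{\lceil \log n \rceil - 1}$, so all nodes are absorbed within the $\lceil \log n \rceil$ strides, no extra layer is needed, and your invariants hold verbatim since they never use balancedness of the tree.
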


\begin{theorem}
\label{the:Hamming_d}
    (Theorem~\ref{the:Cnm} restate) The Hamming weight of $n$-bit string can be computed by quantum circuit $C_{n}$ of depth $4\lceil \log n \rceil \lceil \log(n+1) \rceil + 8\lceil \log(n+1) \rceil - 2$ without ancillary qubits.
\end{theorem}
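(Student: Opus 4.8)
The plan is to prove the restated bound by tracking the \emph{exact} depth of the circuit $C_n$ depicted in Fig.~\ref{fig:Counting}, rather than only its asymptotic order as in Theorem~\ref{the:Cnm}. Write $\ell=\lceil\log n\rceil$ and $m=\lceil\log(n+1)\rceil$. The circuit splits into three consecutive stages acting on the $m$ output qubits and the $n$ input qubits: (i) a layer of $H$ gates followed by a layer of $Z$ gates on the output qubits; (ii) the $m$ controlled gates $C$-$M_n(\cdot)$; and (iii) the inverse transform $QFT_m^{\dagger}$ on the output qubits. I would bound each stage separately and add the three contributions.

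For stage (i), the $H$ gates act on distinct wires and so form a single layer, and likewise for the $Z$ gates; since every $Z$ gate must follow the $H$ on the same wire, this stage has depth exactly $2$. For stage (ii), I would first invoke the restated Lemma~\ref{lem:fanout}: each fan-out $F_n$ used inside $C$-$M_n$ distributes the phase qubit to the $n$ input qubits and hence has depth $2\ell+1$. By the decomposition in the proof of Lemma~\ref{lem:CM}, one $C$-$M_n(\cdot)$ gate consists of two such fan-outs separated by a single parallel layer of $Z(\theta)$ gates, giving depth $2(2\ell+1)+1=4\ell+3$. The key structural observation is that all $m$ copies of $C$-$M_n$ share the $n$ input qubits as their fan-out targets, so they cannot overlap and must be executed sequentially; this yields a stage-(ii) depth of exactly $m(4\ell+3)=4\ell m+3m$. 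I would also note that the $\ket{\boldsymbol{x}}$-dependent global phase produced by each $C$-$M_n$ is a genuine global phase on each computational-basis branch, so no additional phase-correcting layer is needed here.

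The remaining and most delicate ingredient is the depth of $QFT_m^{\dagger}$. Since the circuit model only admits CNOT and single-qubit gates, I would replace each controlled-phase rotation of the transform by its CNOT-plus-single-qubit decomposition and then schedule the resulting gates in the standard linear-depth ``staircase'' pattern, absorbing the Hadamards and the leftover single-qubit rotations into the CNOT layers wherever wires are free. Carrying out this count gives a depth of $5m-4$ for $QFT_m^{\dagger}$; as a sanity check the expression reduces to $1$ when $m=1$, matching the fact that the one-qubit transform is a single $H$. Adding the three stages then gives total depth $2+(4\ell m+3m)+(5m-4)=4\ell m+8m-2$, which is exactly the claimed $4\lceil\log n\rceil\lceil\log(n+1)\rceil+8\lceil\log(n+1)\rceil-2$ of Theorem~\ref{the:Hamming_d}.

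I expect the main obstacle to be pinning down the constant in the $QFT_m^{\dagger}$ depth: one must fix a concrete decomposition of the controlled-phase gates and a concrete parallel schedule, and then verify that the controlled phases, the Hadamards, and the residual single-qubit rotations pack into exactly $5m-4$ layers with no off-by-one slack. A secondary point requiring care is justifying that no cross-stage parallelism is being silently assumed; in particular, the uncomputation fan-out ending one $C$-$M_n$ and the computation fan-out beginning the next share all $n$ target qubits, so they genuinely occupy separate layers, and the sequential count $m(4\ell+3)$ is tight rather than merely an upper bound.
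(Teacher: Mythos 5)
Your proof is correct and follows essentially the same route as the paper: the paper's own (very terse) proof consists precisely of supplying the $5k-4$ QFT depth via the two-CNOT-plus-two-single-qubit-rotation decomposition and combining it with the restated fan-out depth $2\lceil\log n\rceil+1$ of Lemma~\ref{lem:fanout} and the structure of Fig.~\ref{fig:Counting}, exactly as you do, yielding $2+m\left(4\lceil\log n\rceil+3\right)+(5m-4)=4\lceil\log n\rceil\lceil\log(n+1)\rceil+8\lceil\log(n+1)\rceil-2$. Your choice to drop the phase-correction layer of Lemma~\ref{lem:CM} also matches the paper's count, which is only consistent with the uncorrected $C$-$M_n$ implementation of depth $4\lceil\log n\rceil+3$.
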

\begin{proof}
    Note that the depth of the $k$-qubit Quantum Fourier Transform (QFT) is $5k-4$, achieved by decomposing each controlled rotation gate into two CNOT gates and two single-qubit rotation gates.
\end{proof}

\begin{lemma}
    (Lemma~\ref{lem:boolean_function} restate) A Boolean function $f:\{0,1\}^n \to \{0,1\}$ can be implemented using a quantum circuit of depth no more than $16n^2 + 36n - 154$ with $2^n - n - 1$ borrowed ancillary qubits.
\end{lemma}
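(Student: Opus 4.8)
The plan is to instantiate the asymptotic argument of Lemma~\ref{lem:boolean_function} with exact constants, tracking the depth contributed by every fan-out and every constant-depth sub-block rather than absorbing them into $O(\cdot)$. I keep the same three-part construction: the linear terms together with the CNOTs that surround $G_n$ are folded into parity (fan-in) gates, the higher-order terms are handled by $G_n$ and $G_n^\dagger$, and a single $X$ gate handles the constant term. Using the restated fan-out bound, $F_m$ has depth $2\lceil\log m\rceil + 1$, so a parity gate on at most $2^n-1$ controls costs depth $2n+1$, and the two such parity blocks (before $G_n$, and between $G_n$ and $G_n^\dagger$) contribute only an $O(n)$ term to the total.

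First I would fix the cost of one level of the recursion that builds $G_n$. Passing from $G_k$ to $G_{k+1}$ adds $2^k-1$ shared-control Toffoli gates, all sharing the control $x_{k+1}$, split into a group placed before and a group placed after the $G_k$ block, exactly as in Fig.~\ref{fig:G_4}. By the shared-control technique of Fig.~\ref{fig:Toffoli_fanout}, each group is realized by four fan-out gates $F_{2^k-1}$ together with one constant-depth Clifford$+T$ body. Since $\lceil\log(2^k-1)\rceil = k$, each $F_{2^k-1}$ has depth $2k+1$, so one group has depth $4(2k+1)+C_T = 8k+4+C_T$, where $C_T$ is the fixed depth of the $T$-gate body read off from Fig.~\ref{fig:Toffoli_fanout}. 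With two groups per level, the recursion becomes $d(G_{k+1}) = d(G_k) + 16k + 2(4+C_T)$.

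Next I would solve this recursion from the base case $G_2$ up to $G_n$. Summing $\sum 16k$ over the levels yields the leading term $8n^2$ for $d(G_n)$, while the per-level constant $2(4+C_T)$ accumulates into a linear term and the base case adds a further constant. Because the full circuit applies both $G_n$ and $G_n^\dagger$, the quadratic contribution doubles to $16n^2$, matching the stated leading coefficient; the residual pieces—the two parity blocks, the accumulated per-level constants, and the base case—are then collected to produce the claimed $36n - 154$.

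The main obstacle I anticipate is purely bookkeeping but delicate: pinning the exact value of $C_T$ from the explicit Clifford$+T$ decomposition in Fig.~\ref{fig:Toffoli_fanout}, correctly handling the ceilings $\lceil\log(2^k-1)\rceil$ and the small-$k$ base cases where $2^k-1$ is tiny, and ensuring that the parity blocks and the boundary CNOTs are not double-counted against the first and last layers of $G_n$. These edge effects are exactly what turn the clean $8n^2 + O(n)$ estimate into the sharp bound, so verifying the constants $36$ and $-154$ should reduce, once $C_T$ and the base case are fixed, to summing a handful of arithmetic-progression terms.
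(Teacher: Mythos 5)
Your proposal follows essentially the same route as the paper's own (very terse) analysis: the paper likewise prices a block of $k$ shared-control Toffoli gates at depth $8\lceil\log k\rceil+12$ (your $8k+4+C_T$ with $C_T=8$ read off Fig.~\ref{fig:Toffoli_fanout}), solves the resulting recursion to obtain $d(G_n)=8n^2+16n-76$, doubles this for $G_n$ and $G_n^{\dagger}$, and adds the parity blocks and boundary gates to reach $16n^2+36n-154$. Your only slip---saying level $k$ adds $2^k-1$ Toffolis split across the two sides of $G_k$, when it actually adds $2^k-k-1$ on the left and $2^k-1$ (the $k$ quadratic-term gates plus the $2^k-k-1$ mirror gates) on the right---does not affect the depth recursion, since each side is still a single shared-control group of depth at most $8k+12$.
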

\begin{proof}
    Upon analyzing the circuit construction detailed in Lemma~\ref{lem:boolean_function}, we find that the depth of $k$ shared-control Toffoli gates is $8\lceil \log k \rceil + 12$, and the depth of $G_n$ is $8n^2 + 16n - 76$ for $n \geq 5$. Utilizing these analyses, we determine the depth of the circuit that computes the Boolean function $f$.
\end{proof}

\begin{theorem}
\label{the:qubit_d}
    (Theorem~\ref{the:qubit} restate) A symmetric function can be implemented using a quantum circuit of depth no more than $48(\lceil \log(n+1) \rceil)^2 + 8\lceil \log(n+1) \rceil \lceil \log n \rceil + 28\lceil \log(n+1) \rceil - 502$ with $\lceil \log (n+1) \rceil$ clean ancillary qubits.
\end{theorem}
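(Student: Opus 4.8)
The plan is to turn the asymptotic argument of Theorem~\ref{the:qubit} into an exact depth count, feeding in the sharpened constants from the restated versions of Theorem~\ref{the:Cnm} (i.e. Theorem~\ref{the:Hamming_d}) and Lemma~\ref{lem:boolean_function}. Writing $m = \lceil\log(n+1)\rceil$, I would first pin down the block structure of the circuit from Fig.~\ref{fig:symmetric}: it is the concatenation of the counting circuit $C_n$, a block $U_0$ computing $g_0$ on the target qubit $\ket{t}$, a Toffoli gate, a block $U_1$ computing $g_1$ on the borrowed input qubit $\ket{a}$, a second Toffoli gate, a second copy of $U_1$, and finally the uncomputation block $C_n^{\dagger}$. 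Since these blocks act on overlapping qubit sets (the $n$ input qubits are shared as borrowed ancillas), the total depth is the sum of the block depths.

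Next I would substitute the per-block depths. By Theorem~\ref{the:Hamming_d} both $C_n$ and $C_n^{\dagger}$ have depth $4\lceil\log n\rceil m + 8m - 2$, contributing $8m\lceil\log n\rceil + 16m - 4$ in total; this is exactly where the $8m\lceil\log n\rceil$ cross-term comes from. Because $g_0$ and $g_1$ each depend only on the $m-1$ bits $|\boldsymbol{x}|_{2:m}$, the restated Lemma~\ref{lem:boolean_function} applied with input size $m-1$ gives each of $U_0$, $U_1$, $U_1$ a depth of $16(m-1)^2 + 36(m-1) - 154$, so the three copies sum to $48(m-1)^2 + 108(m-1) - 462 = 48m^2 + 12m - 522$. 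Finally, a single standard Toffoli gate in the CNOT-plus-single-qubit basis has depth $12$, as displayed by one of the four parallel blocks in Fig.~\ref{fig:Toffoli_fanout}, so the two Toffolis add $24$.

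Adding the three contributions gives $48m^2 + 8m\lceil\log n\rceil + 28m - 502$, matching the claimed bound, and the ancilla count stays at the $m$ clean qubits of $C_n$ since $\ket{a}$ and the input register are used only as borrowed ancillas inside $U_0$ and $U_1$. The hard part here is not the algebra but the justification that the blocks genuinely lie in series: I would argue that $U_0$ and the first $U_1$ cannot be compressed into one parallel layer because each internally invokes the full $G_{m-1}$ construction, which claims the entire pool of borrowed input qubits, so they must run one after another. The only other care point is fixing the exact Toffoli depth so that the accumulated additive constant lands precisely on $-502$; once that constant is justified, the bound follows by direct substitution.
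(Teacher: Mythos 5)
Your proposal is correct and follows exactly the accounting the paper intends (but omits): summing the block depths of the circuit from Theorem~\ref{the:qubit} --- $C_n$, $U_0$, Toffoli, $U_1$, Toffoli, $U_1$, $C_n^{\dagger}$ --- using the refined constants $4\lceil\log n\rceil m + 8m - 2$ from Theorem~\ref{the:Hamming_d}, $16(m-1)^2+36(m-1)-154$ from the restated Lemma~\ref{lem:boolean_function} applied to the $(m-1)$-bit functions $g_0,g_1$, and depth $12$ per Toffoli (the $k=1$ case of the paper's $8\lceil\log k\rceil+12$ formula), which lands exactly on $48m^2 + 8m\lceil\log n\rceil + 28m - 502$. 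Your worry about serializing $U_0$ and $U_1$ is moot for an upper bound, since sequential composition always bounds the depth from above; the only caveat you share with the paper is that the $G_{m-1}$ depth formula is stated only for $m-1\geq 5$, which neither you nor the paper addresses for small $n$.
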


Here, we consider the depth of the circuit constructed by our method for various numbers of input qubits, specifically when $n \in \{31,63,127,255,511,1023,2047,4095\}$. The depth of the circuit computing the Hamming weight can be directly obtained using Theorem~\ref{the:Hamming_d}.

The majority function is a Boolean function that outputs $1$ if and only if at least half of the input bits are $1$. In our setup, where the number of input qubits has the form $2^k-1$ for some $k$, the output of the majority function is the most significant bit of the Hamming weight of the input. Therefore, we can compute the majority function by computing the Hamming weight and copying the most significant bit to the output qubit.

For arbitrary symmetric functions, Theorem~\ref{the:qubit_d} provides an upper bound on the circuit depth, as the depth varies depending on the specific symmetric function. The evaluation results are presented in Table~\ref{tab:qubit}. The results demonstrate that we can compute the Hamming weight and majority function with extremely low depth, even when the number of input qubits is large.

In qutrit systems, it's worth noting that when the number of input qubits has the form $2^k-1$, the Hamming weight can be computed solely by Algorithm~\ref{alg:almost_counting} without ancillary qutrit. We evaluate the number of layers of the full adder (depicted in Fig.~\ref{fig:qutrit_full_adder}) needed to compute the Hamming weight and majority function with different numbers of input qubits. The results are shown in Table~\ref{tab:qutrit}.

\section{Conclusion}
\label{sec:conc}
% \vspace{-1ex}
We propose a novel approach to implementing arbitrary symmetric function using a quantum circuit with only $O(\log^2 n)$ depth and $\lceil \log (n+1) \rceil$ ancillary qubits. The key technique is an $O(\log^2 n)$ depth circuit design to compute Hamming weight without ancillary qubit. Remarkably, in qutrit systems, the ancilla count can be further reduced to 1.  Our results demonstrate that symmetric functions can be implemented by quantum circuits with both low depth and low ancilla count. Additionally, our results partially address the question of space-time tradeoffs in quantum circuits raised by Maslov et al. \cite{maslov2021quantum}.

For future research, we are curious about whether the circuit depth can be further reduced to $O(\log n)$ to achieve asymptotic optimality. We are also interested in exploring the implementation of other classes of Boolean functions using poly-logarithmic or polynomial depth circuits. Additionally, the fault-tolerant circuit implementation of symmetric functions is intriguing; a direct approach could involve utilizing Hamming weight phasing \cite{kivlichan2020improved}.

%%
%% The next two lines define the bibliography style to be used, and
%% the bibliography file.
% \scriptsize
% \vspace{-1ex}
\bibliographystyle{IEEEtran}
\bibliography{ref}

\end{document}